\newtheorem{theorem}{Theorem}[section]
\newtheorem{proposition}{Proposition}[section]
\newtheorem{lemma}{Lemma}[section]
\newtheorem{corollary}{Corollary}[section]
\begin{document}

\title{Simpler and efficient characterizations of tree $t$-spanners for graphs with few $P_4$'s and $(k,\ell)$-graphs}


\author[Running Author]{Fernanda Couto\affmark{a},  Lu\'is Cunha\affmark{b,$\ast\ast$} and Diego Ferraz\affmark{c}}

\affil{\affmark{a}Departamento de Ci\^encia da Computa\c{c}\~ao, Universidade Federal Rural do Rio de Janeiro, Nova Igua\c{c}u, Rio de Janeiro, Brasil}
\affil{\affmark{b}Departamento de Ci\^encia da Computa\c{c}\~ao, Universidade Federal Fluminense, Niter\'oi, Rio de Janeiro, Brasil}
\affil{\affmark{c}COPPE, Universidade Federal do Rio de Janeiro, Rio de Janeiro, Brasil}
\email{fernandavdc@ufrrj.br [F. Couto]; lfignacio@ic.uff.br [L. Cunha];\\ ferrazda@cos.ufrj.br [D. Ferraz]}



\thanks{\affmark{$\ast\ast$}Author to whom all correspondence should be addressed (e-mail: lfignacio@ic.uff.br).}

\historydate{Received DD MMMM YYYY; received in revised form DD MMMM YYYY; accepted DD MMMM YYYY}

\begin{abstract}
A tree $t$-spanner of a graph $G$ is a spanning tree $T$ in which the distance between any two adjacent vertices of $G$ is at most~$t$. The smallest $t$ for which $G$ has a tree $t$-spanner is called tree stretch index. The $t$-admissibility problem aims to decide whether the tree stretch index is at most $t$.  Regarding its optimization version, the smallest $t$ for which $G$ is $t$-admissible is the stretch index of~$G$, denoted by $\sigma_T(G)$. Given a graph with $n$ vertices and $m$ edges, the recognition of $2$-admissible graphs can be done $O(n+m)$ time, whereas $t$-admissibility is \NP-complete for $\sigma_T(G) \leq t$, $t \geq 4$ and deciding if $t = 3$ is an open problem, for more than 20 years. Since the structural knowledge of classes can be determinant to classify $3$-admissibility's complexity, in this paper we present simpler and faster algorithms to check $2$ and $3$-admissibility for families of graphs with few $P_4$'s and $(k,\ell)$-graphs. Regarding $(0,\ell)$-graphs, we present lower and upper bounds for the stretch index of these graphs and characterize graphs whose stretch indexes are equal to the proposed upper bound. Moreover, we prove that $t$-admissibility is \NP-complete even for line graphs of subdivided graphs. 
\end{abstract}

\keywords{$3$-admissibility; Stretch index; graphs with few $P_4$'s; $(k,\ell)$-graphs; structural characterization}

\maketitle

\section{Introduction}\label{sec:intro}

A \emph{tree $t$-spanner} of a graph $G$ is a spanning tree $T$ of $G$ in which the distance between every pair of vertices is at most $t$ times their distance in~$G$ or, equivalently, the distance between two adjacent vertices of $G$ is at most~$t$ in $T$~(\emph{cf.}~\cite{cai1995tree,ctw20,TCS20}). 
If a graph has a tree $t$-spanner, then it is called a \emph{tree $t$-spanner admissible graph} (or simply \emph{$t$-admissible}). 
The parameter~$t$ of a tree $t$-spanner is called the \emph{tree stretch factor}, denoted by $\sigma(T)$, and the smallest $t$ for which $G$ is~$t$-admissible is the \emph{tree stretch index} of $G$, denoted by~$\sigma_T(G)$. 
The $t$-admissibility problem aims to decide whether $\sigma_T(G)\leq t$. 
The problem of determining the \emph{tree stretch index} of $G$ is also called the \textsc{minimum stretch spanning tree problem} (MSST). 
From now on, when we refer to MSST, we are dealing with the decision version of this problem. 

\subsection*{On the $2$-admissibility problem}

Cai and Corneil~\cite{cai1995tree} proved that $t$-admissibility is \NP-complete, for $t \geq 4$, whereas $2$-admissible graphs can be recognized in $O(|V|+|E|)$ time. 
However, the characterization of $3$-admissible graphs is still an open problem. 
The characterization of $2$-admissible graphs (Theorem~\ref{thm:caicorneil}) deals with triconnected components of a connected graph, defined as any maximal subgraph that does not contain two vertices whose removal disconnects the graph. 
A \emph{nonseparable} graph is a graph without a \emph{cut vertex}, i.e., a vertex whose removal disconnects the graph. A \emph{star} with $n+1$ vertices is the complete bipartite graph $K_{1,n}$. A \emph{$v$-centered star} is a star centered on~$v$.

\begin{theorem}\cite{cai1995tree}\label{thm:caicorneil}
A nonseparable graph $G$ is $2$-admissible if and only if $G$ contains a spanning tree $T$ such that for each triconnected component $H$ of $G$, $T\cap H$ is a spanning star of~$H$.
\end{theorem}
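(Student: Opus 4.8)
The plan is to prove the two implications separately: the converse direction is routine, while the forward direction reduces, by induction along the triconnected decomposition, to the case of a $3$-connected graph, for which the statement becomes transparent.

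For ($\Leftarrow$), suppose $T$ is a spanning tree of $G$ such that $T\cap H$ is a spanning star of $H$ for every triconnected component $H$. Any edge $uv\in E(G)$ lies in some triconnected component $H$; since $T\cap H$ is a spanning star, it is a connected subgraph of the tree $T$ containing both $u$ and $v$, so the unique $uv$-path of $T$ is contained in $T\cap H$ and hence has length at most~$2$. Thus $d_T(u,v)\le 2$ for every edge of $G$, i.e.\ $T$ is a tree $2$-spanner.

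For ($\Rightarrow$), the heart of the matter is the following special case, which I would establish first: \emph{if $G$ is $3$-connected, then a spanning tree $T$ of $G$ is a tree $2$-spanner if and only if $T$ is a spanning star centered at a universal vertex of $G$.} The ``if'' part is immediate. For the ``only if'' part, suppose $T$ is a tree $2$-spanner that is not a star; then $T$ has an edge $uv$ with $\deg_T(u),\deg_T(v)\ge 2$, so $T-uv$ splits into components $A\ni u$ and $B\ni v$, each with at least two vertices. Since $G$ is $3$-connected, $\{u,v\}$ is not a vertex cut, so $G$ has an edge $ab$ with $a\in A\setminus\{u\}$ and $b\in B\setminus\{v\}$; but the $ab$-path of $T$ must traverse $u$ and then $v$, hence has length at least~$3$, contradicting $d_T(a,b)\le 2$. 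Therefore $T$ is a star, and its center, being $T$-adjacent (hence $G$-adjacent) to all other vertices, is universal in $G$.

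It remains to lift this to an arbitrary nonseparable $G$, which I would do by induction along the triconnected decomposition. Given a tree $2$-spanner $T$ of $G$ and a $2$-cut $\{x,y\}$ splitting $G$ into sides $G_1$ and $G_2$ with $G_1\cap G_2=\{x,y\}$, I would first show that the unique $xy$-path of $T$ lies entirely within one side, say $G_1$, and has length at most~$2$; it then follows that $T\cap G_2$, together with the virtual edge $xy$ when $xy\notin E(T)$, is a tree $2$-spanner of $G_2+xy$, while $T\cap G_1$ is a tree $2$-spanner of $G_1$. Applying this repeatedly until each piece is a triconnected component $H$ (a $3$-connected graph, a cycle, or a bond), the tree that $T$ induces on $H$ is a tree $2$-spanner of $H$ with its virtual edges; by the $3$-connected claim above, and a direct check for cycles and bonds, this induced tree must be a spanning star of $H$ (in particular no cycle of length $\ge 4$ can arise, matching the fact that such a subgraph would have stretch index larger than $2$). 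Reassembling the components yields $T\cap H$ a spanning star of $H$ for every $H$, as required. The step I expect to be the main obstacle is exactly the confinement claim used in the induction: that the $T$-path between the two vertices of a $2$-cut cannot straddle both sides and must have length at most~$2$. This is where the global tree-distance constraint has to be played against the local separator structure, and it is also what guarantees that $T\cap H$ is connected, so that it is a genuine spanning subtree of each triconnected component $H$.
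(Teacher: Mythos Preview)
The paper does not prove this theorem: it is quoted verbatim from Cai and Corneil and stated without proof, serving only as background for the $O(n+m)$ recognition algorithm described in the paragraph that follows. There is therefore no proof in the paper to compare your proposal against.

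For what it is worth, your outline is in the spirit of Cai and Corneil's original argument. The $(\Leftarrow)$ direction and the $3$-connected base case are correct as written (for the latter, note that a non-star tree has diameter at least~$3$, so a diametral path supplies the required edge $uv$ with $\deg_T(u),\deg_T(v)\ge 2$). The part you flag as ``the main obstacle'' is indeed where the work lies. That the $T$-path between the two cut vertices $x,y$ stays within one side is automatic from the separator property, but the bound $d_T(x,y)\le 2$ is \emph{not} immediate when $xy\notin E(G)$: you cannot apply the $2$-spanner condition directly to the pair $x,y$, and you must instead argue via edges of $G$ crossing the cut. Without this bound you cannot conclude that adding the virtual edge $xy$ on the other side still yields a tree $2$-spanner there, so the induction does not go through. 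Your proposal correctly identifies this as the crux but does not supply the argument; as written it is an outline rather than a proof.
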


So far, there are not studies regarding structural characterization of $2$-admissible graphs. Theorem~\ref{thm:caicorneil} can be used in order to develop an $O(n+m)$ time algorithm (see~\cite{cai1995tree}), which consists, in a nutshell, of: i) finding universal vertices in each triconnected component of a graph $G$; or ii) verifying that in each biconnected component of $G$ with adjacent vertices $u$ and $v$, forming a vertex cut of length two, the corresponding edge $uv$ belongs to a tree $2$-spanner of $G$. 
The resulted subgraph $H$ considering i) and ii) is a tree $2$-spanner of $G$ if and only if $H$ is a spanning tree of $G$.

The above strategy is a constructive algorithm and do not provide directly structural characterizations for graphs to be $2$-admissible without considering their spanning tree. 
Since the recognition of $3$-admissible graphs is an open problem, there are graph classes whose stretch indexes are bounded by specific values, such as cographs or split graphs (\emph{cf.}~\cite{TCS20}). 
Moreover, there are not many studies developing algorithms in order to determine the exact value of the stretch indexes for graph classes which are $3$-admissible in a better way than the recognition of general $2$-admissible graphs.



\subsection*{Contributions}

We develop faster and simpler ways to determine stretch indexes for: 
graphs with few $P_4$'s, i.e. graphs with a bounded number of induced $P_4$'s; 
and 
subclasses of $(k,\ell)$-graphs, i.e. graphs whose vertex sets can be partitioned into $k$ independent sets and $\ell$ cliques. 

Both classes have already been considered by~\cite{LAGOS19,TCS20}, by an extensive study on \P \ \emph{versus} \NP-complete dichotomy regarding $t$-admissibility. 
Section~\ref{sec:P4} is devoted to study of admissibility for graphs with few $P_4$'s, which have stretch index equal to $2$ or $3$. 
The proposed strategies are simpler and faster than the application of Cai and Corneil's recognition algorithm for cographs, $P_4$-sparse graphs and $P_4$-tidy graphs. 
Section~\ref{sec:kl} presents structural characterizations for $3$-admissible $(k,\ell)$-graphs, such as $(1,1)$- and $(0,2)$-graphs. 
Moreover, regarding $(0,\ell)$-graphs, lower and upper bounds for the stretch index are presented, as well as a characterization of graphs that have stretch indexes equal to the proposed upper bound. 
We also settle that $t$-admissibility is \NP-complete for line graphs of subdivided graphs, 
which turns $t$-admissibility closely related to the edge $t$-admissibility problem, as proposed by~\cite{ctw20}. 
Section~\ref{sec:conc} concludes the paper with final remarks and open questions.

\paragraph{Preliminaries}\label{sub:pre}

Given a graph $G=(V,E)$, $d_G(x,y)$ denotes the distance between $x$ and $y$ in~$G$ and $d_G(v)$, the degree of~$v$ in $G$. A \emph{pendant vertex} is a vertex of degree~$1$.
A \emph{non-edge} of a spanning tree $T$ is an edge of $G\setminus T$. 
A $p$-path (or a path $P_{p+1}$) is a path formed by $p$ edges ($p+1$ vertices). 
Given a graph $G$, its \emph{line graph} $L(G)$ is obtained as follows: $V(L(G))=E(G)$; $E(L(G)) = \{\{uv,uw\} | uv, uw \in E(G)\}$. I.e., each edge of $G$ is a vertex of $L(G)$ and if two edges share an endpoint, then their corresponding vertices are adjacent in $L(G)$. 
The \emph{distance between two edges} $e_1$ and $e_2$ of $G$, for $e_1,e_2 \in E(G)$ is the distance between their corresponding vertices in $L(G)$. 
The subdivided graph of $G$ is the one so that each edge $vw \in E(G)$ becomes a path $v,x,w$, where $x$ is a new vertex. 

\section{Graphs with few $P_4$'s}\label{sec:P4}
Graphs with few $P_4$'s can be constructed by a finite number of operations, as union and join. 
Given graphs $G_i = (V_i,E_i), \ i= 1, \ldots, p$, we formally define the union and the join operations, resp., as follows: $G_1 \ \textcircled{\footnotesize 0} \ \cdots \textcircled{\footnotesize 0} \ G_p = (V_1\cup \cdots \cup V_p, E_1 \cup \cdots \cup E_p)$; $G_1 \ \textcircled{\footnotesize 1} \ \cdots \ \textcircled{\footnotesize 1} \ G_p = (V_1 \cup \cdots \cup V_p, E_1 \cup \cdots \cup E_p \cup \{xy \ | \ x \in V_i, y \in V_j, \ i \neq j, \ 1 \leq i,j \leq p\})$.

A \emph{cograph} is a $P_4$-free graph. A trivial graph is a cograph, and any other can be obtained by disjoint union or join operations of cographs. 
Lemma~\ref{lm:join} states that any graph obtained by the join of two graphs, $G = G_1 \textcircled{\footnotesize 1} G_2$, is a $3$-admissible graph.

\begin{lemma}\label{lm:join}[\cite{LAGOS19}]
 Given graphs $G_1$ and $G_2$, and $G = G_1 \textcircled{\footnotesize 1} G_2$, we have that $G$ is $3$-admissible.
\end{lemma}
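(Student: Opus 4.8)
The plan is to prove the lemma constructively: exhibit one spanning tree of $G$, a \emph{double star}, and then verify by a short case analysis on edge types that it has stretch factor at most~$3$. Assume $G_1$ and $G_2$ are both nonempty (otherwise $G = G_1 \textcircled{\footnotesize 1} G_2$ degenerates to a single factor and the join is not meaningful). Fix an arbitrary vertex $u \in V(G_1)$ and an arbitrary vertex $v \in V(G_2)$. Since $G$ is a join, $u$ is adjacent in $G$ to every vertex of $V(G_2)$ and $v$ is adjacent in $G$ to every vertex of $V(G_1)$; in particular $uv \in E(G)$. I would then define $T$ to be the spanning subgraph of $G$ with edge set
\[
\{uv\}\ \cup\ \{\, uw : w \in V(G_2)\setminus\{v\} \,\}\ \cup\ \{\, vw : w \in V(G_1)\setminus\{u\} \,\}.
\]

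First I would argue that $T$ is a spanning tree of $G$: every listed edge belongs to $E(G)$ by the previous paragraph; $T$ meets every vertex of $G$; $T$ is connected because each vertex is adjacent in $T$ to $u$ or to $v$ and the edge $uv$ lies in $T$; and $T$ has exactly $(|V(G_1)|-1)+(|V(G_2)|-1)+1 = |V(G)|-1$ edges, hence is acyclic. Second, I would bound $d_T(x,y)$ for each edge $xy \in E(G)$ according to the sides of $x$ and $y$. If $x,y \in V(G_1)$: when $u \notin \{x,y\}$ both $x$ and $y$ are neighbours of $v$ in $T$, so $d_T(x,y)=2$; when $x=u$ the path $u,v,y$ lies in $T$, so $d_T(x,y)\le 2$. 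The case $x,y \in V(G_2)$ is symmetric with $u$ and $v$ interchanged. Finally, if $x \in V(G_1)$ and $y \in V(G_2)$ (an edge created by the join): if $x=u$ or $y=v$ then $xy \in T$; otherwise the path $x,v,u,y$ lies in $T$, so $d_T(x,y)\le 3$. Thus $\sigma(T)\le 3$ and $G$ is $3$-admissible.

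I do not expect a genuine obstacle here: the argument is elementary once the double star is written down. The only places needing mild care are the bookkeeping in the degenerate cases $|V(G_1)|=1$ or $|V(G_2)|=1$ (where one of the two stars is empty, and the construction still goes through), and the observation that the value $3$ is really necessary for the family — a join edge between a non-central vertex of $G_1$ and a non-central vertex of $G_2$ is at distance exactly $3$ in $T$ — which is precisely why the statement is phrased with $t=3$ rather than $t=2$.
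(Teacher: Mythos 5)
Your double-star construction is correct, and it is essentially the same argument behind the cited result: the paper states Lemma~\ref{lm:join} without reproducing a proof (it is imported from~\cite{LAGOS19}), and the standard proof there is exactly this bi-star spanning tree with one center in each side of the join, with the worst case being a join edge between two non-central vertices at distance $3$. Your case analysis and the degenerate cases $|V(G_i)|=1$ are handled correctly, so nothing is missing.
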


As a direct consequence of Lemma~\ref{lm:join}, we have that cographs are $3$-admissible. Moreover, Lemma~\ref{lm:join} implies that, if $G$ is a cograph, then $\sigma_T(G) \leq 3$. Interestingly, having a universal vertex is also a necessary condition for a cograph to be $2$-admissible, according to~\cite{LAGOS19}. 
Hence, $2$-admissibility recognition can be done in $O(n)$ time for cographs, for $n$ being the number of vertices of a given cograph $G$, 
by checking whether $G$ has a universal vertex. 

A graph is \emph{$P_4$-sparse} if for each set of $5$ vertices, there is at most one induced $P_4$. 
A graph $G$ is \emph{$P_4$-tidy} if, for each induced $P_4$ of $G$, say $P$, there is at most one vertex $v \in V(G) \setminus V(P)$ such that $V(P)\cup \{v\}$ induces at most two $P_4$'s in $G$.

\paragraph{Stretch index \emph{vs} spider operation}

A graph $G$ is a \emph{spider} if its vertex set can be partitioned into $\mathcal{S},\mathcal{K}$ and $\mathcal{R}$ such that (i) $\mathcal{K}$ is a clique, $\mathcal{S}$ is an independent set and $|\mathcal{S}|=|\mathcal{K}| \geq 2$; (ii) each vertex of $\mathcal{R}$ is adjacent to all vertices of $\mathcal{K}$ (a join operation) and is non-adjacent to any vertex of $\mathcal{S}$; (iii) There is a bijection $f: \mathcal{S} \mapsto \mathcal{K}$ such that, for all $x \in \mathcal{S}$, either $N(x) = \{f(x)\}$, called a \emph{thin spider}, or $N(x) = \mathcal{K} - \{f(x)\}$, called a \emph{thick spider}. 


\cite{jamison1992tree} constructively characterized $P_4$-sparse graphs. A graph $G$ is \emph{$P_4$-sparse} if and only if for each one of its induced subgraphs $H$, exactly one of the following conditions is satisfied: (i) $H$ is disconnected; (ii) $\overline{H}$ is disconnected; (iii) $H$  is isomorphic to a spider. Note that items (i) and (ii) suggest the union and the join operations applied in a cograph construction. 
In order to construct a $P_4$-sparse graph, an operation concerning item (iii) is defined in the following. Let $G_1 = (V_1, \emptyset)$ and $G_2=(V_2,E_2)$ be two disjoint graphs, where $V_2 = \{v\} \cup \mathcal{K} \cup \mathcal{R}$ and such that: (a) $|\mathcal{K}| = |V_1| + 1 \geq 2$; (b) $\mathcal{K}$ is a clique; (c) $x \in \mathcal{R}$ is adjacent to each vertex $x' \in \mathcal{K}$ and $x$ is not adjacent to $v$; (d) there exists a vertex $v' \in \mathcal{K}$ such that $N_{G_2}(v) = \{v'\}$ or $N_{G_2}(v) = \mathcal{K} \setminus \{v'\}$. Choose a bijective function $f: V_1 \mapsto \mathcal{K}\setminus \{v'\}$ and define the operation \textcircled{\footnotesize 2} as follows: $G_1 \ \textcircled{\footnotesize 2} \ G_2 = (V_1 \cup V_2, E_2 \cup E')$, where $E' = \{xf(x) \ | \  x \in V_1\}, \ \mbox{if} \ N_{G_2}(v) = \{v'\}$, or $E' = \{xz \ | \ x \in V_1, z \in \mathcal{K} \setminus \{v'\}\}, \ \mbox{if} \ N_{G_2}(v) = \mathcal{K} \setminus \{v'\}$. 

A graph is a spider if and only if it can be obtained by the two proper induced subgraphs generated by~\textcircled{\footnotesize 2}. 
Moreover, a spider is $P_4$-sparse if and only if the subgraph induced by $\mathcal{R}$ is $P_4$-sparse~\cite{jamison1992tree}.
In this way, a graph $G$ is $P_4$-sparse if and only if $G$ can be obtained from trivial graphs, by applying, in any order, operations \textcircled{\footnotesize 0}, \textcircled{\footnotesize 1} and \textcircled{\footnotesize 2} a finite number of times.

As a consequence, each $P_4$-sparse graph has an associated tree, called \emph{PS-tree}. Essentially, in a PS-tree, leaves are the vertices of the graph, each internal node is labeled by $0,1$ or $2$ (accordingly to the operation applied to the associated subtree). 
See~\cite{jamison1992tree} for construction details.


\begin{lemma}\label{lm:spider-index}[\cite{LAGOS19}]
 Let $G$ be a spider graph. If $G$ is a thin spider, then $\sigma_T(G) = 2$. Otherwise, $\sigma_T(G) = 3$.
\end{lemma}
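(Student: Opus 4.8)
The plan is to analyze the structure of a spider $G$ with partition $(\mathcal{S},\mathcal{K},\mathcal{R})$ and exhibit explicit spanning trees, then argue matching lower bounds. Write $|\mathcal{K}|=|\mathcal{S}|=k$, fix the bijection $f:\mathcal{S}\to\mathcal{K}$, and pick any vertex $u\in\mathcal{K}$; note that $u$ is adjacent to all of $\mathcal{K}\setminus\{u\}$ and to all of $\mathcal{R}$, since $\mathcal{K}$ is a clique and every vertex of $\mathcal{R}$ is joined to $\mathcal{K}$. So the only vertices not dominated by $u$ are those in $\mathcal{S}$.

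For the thin case, each $x\in\mathcal{S}$ has $N(x)=\{f(x)\}$, so $x$ is a pendant. The plan is to take $T$ to be the $u$-centered star on $\mathcal{K}\cup\mathcal{R}$ together with the pendant edges $xf(x)$ for $x\in\mathcal{S}$; this is a spanning tree. First I would check every edge of $G$ has its endpoints at distance $\le 2$ in $T$: edges inside $\mathcal{K}$ and from $\mathcal{R}$ to $\mathcal{K}$ pass through $u$ (distance $\le 2$); each edge $xf(x)$ is in $T$; and edges inside $\mathcal{R}$ pass through $u$. Hence $\sigma_T(G)\le 2$, and since $G$ contains a $P_4$ (e.g. $x, f(x), u, y$ for $x\ne y\in\mathcal{S}$ with $f(y)\ne u$, using $k\ge 2$), $G$ is not a tree, so $\sigma_T(G)=2$.

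For the thick case, $N(x)=\mathcal{K}\setminus\{f(x)\}$ for each $x\in\mathcal{S}$. Now $x$ is not adjacent to $f(x)$, but it is adjacent to $u$ whenever $f(x)\ne u$; there is exactly one vertex $x_0\in\mathcal{S}$ with $f(x_0)=u$, and $x_0$ is adjacent to every vertex of $\mathcal{K}\setminus\{u\}$. The plan is to take $T$ as the $u$-centered star on $\mathcal{K}\cup\mathcal{R}\cup(\mathcal{S}\setminus\{x_0\})$, plus one edge joining $x_0$ to some $w\in\mathcal{K}\setminus\{u\}$. Then every edge of $G$ has endpoints at distance $\le 3$ in $T$: the only nontrivial cases are edges incident to $x_0$ — an edge $x_0 w'$ with $w'\in\mathcal{K}\setminus\{u\}$ gives the path $x_0,w,u,w'$ of length $3$. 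So $\sigma_T(G)\le 3$. For the lower bound $\sigma_T(G)\ge 3$, I would invoke the characterization of cographs/$2$-admissibility available in the excerpt (Theorem~\ref{thm:caicorneil} and the remark after Lemma~\ref{lm:join} that a cograph is $2$-admissible iff it has a universal vertex): a thick spider with $\mathcal{R}=\emptyset$ is a cograph with no universal vertex (no vertex of $\mathcal{K}$ is adjacent to its own $\mathcal{S}$-partner, and no vertex of $\mathcal{S}$ has full degree), so it is not $2$-admissible; for general $\mathcal{R}$ one argues directly that any tree $2$-spanner would force $x$ within distance $2$ of $f(x)$ through some common neighbour, which a counting/triconnectivity argument over the $\mathcal{S}$-$\mathcal{K}$ part rules out.

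The main obstacle is the lower bound in the thick case when $\mathcal{R}\ne\emptyset$: one cannot simply delete $\mathcal{R}$ because $2$-admissibility is not induced-subgraph-closed in general, so the clean move is to show that in any spanning tree $T$ of a thick spider, some pair $x,f(x)$ (equivalently some edge $xw$ with $w\ne f(x)$) is at distance $\ge 3$, by examining how the $2k$ vertices of $\mathcal{S}\cup\mathcal{K}$ can be attached to $T$ — since the $\mathcal{S}$ vertices have neighbours only in $\mathcal{K}$, a tree $2$-spanner restricted to $\mathcal{S}\cup\mathcal{K}\cup\{\text{one hub}\}$ would have to be a spanning star centered in $\mathcal{K}\cup\mathcal{R}$, contradicting that the center's $\mathcal{S}$-partner is a non-neighbour. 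I expect this case analysis, together with verifying the stretch of the exhibited trees, to be the bulk of the work; the thin case and the upper bounds are routine.
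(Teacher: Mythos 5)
The paper does not actually prove this lemma---it is imported from the cited LAGOS19 reference---so there is no in-paper proof to compare against; judging your argument on its own terms, the two tree constructions and hence both upper bounds ($\sigma_T\le 2$ thin, $\sigma_T\le 3$ thick) are correct, but the lower bound $\sigma_T(G)\ge 3$ in the thick case, which is the substance of the lemma, is not established. Your shortcut rests on a false premise: a thick spider with $|\mathcal{K}|=k\ge 3$ is \emph{not} a cograph (with $f(s_i)=c_i$, the sequence $s_1,c_2,c_1,s_2$ is an induced $P_4$), so the ``cograph is $2$-admissible iff it has a universal vertex'' remark cannot be invoked, even when $\mathcal{R}=\emptyset$. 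The sentence about a tree $2$-spanner ``forcing $x$ within distance $2$ of $f(x)$'' is also off target, since $x$ and $f(x)$ are non-adjacent in a thick spider and a tree $2$-spanner imposes no constraint on that pair; the relevant constraints are on $x$ and the vertices of $\mathcal{K}\setminus\{f(x)\}$. Finally, your fallback claim---that any tree $2$-spanner restricted to $\mathcal{S}\cup\mathcal{K}$ plus a hub ``would have to be a spanning star centered in $\mathcal{K}\cup\mathcal{R}$''---is exactly the statement that needs proof, and it is not true as stated without passing through Theorem~\ref{thm:caicorneil}: for $k=3$ the subgraph induced by $\mathcal{S}\cup\mathcal{K}$ is not triconnected (each $s_i$ is separated by its two neighbours), so nothing forces the restriction of $T$ to be a star. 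A complete argument has to split cases: for $k\ge 4$ the whole thick spider is $3$-connected, so Theorem~\ref{thm:caicorneil} forces $T$ itself to be a spanning star, i.e.\ a universal vertex, which a thick spider lacks; for $k=3$ the triconnected components meeting $\mathcal{S}$ are the three triangles $\{s_i\}\cup(\mathcal{K}\setminus\{f(s_i)\})$, which are pairwise edge-disjoint, and each would need two tree edges, giving six tree edges on the six vertices of $\mathcal{S}\cup\mathcal{K}$---a cycle in $T$, contradiction. None of this appears in your proposal, so the thick-case lower bound is a genuine gap.

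A minor point on the thin case: ``$G$ contains a $P_4$, hence is not a tree'' is a non sequitur ($P_4$ is itself a tree). You need a cycle, which exists as soon as $k\ge 3$ or $\mathcal{R}\ne\emptyset$; in the degenerate case $k=2$, $\mathcal{R}=\emptyset$ the thin spider \emph{is} $P_4$ and $\sigma_T=1$, an exclusion left implicit in the cited statement. Also note that for $k=2$ thick and thin spiders coincide, so ``otherwise'' tacitly means $k\ge 3$, which is where your lower-bound work must happen.
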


\begin{lemma}\label{lm:p4sparse-upper}
Let $G$ be a $P_4$-sparse graph, then $\sigma_T(G)\leq 3$.
\end{lemma}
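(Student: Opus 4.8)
The plan is to read the bound straight off the structural characterization of $P_4$-sparse graphs recalled above, together with Lemmas~\ref{lm:join} and~\ref{lm:spider-index}; essentially no new work is needed. First I would reduce to the connected case: every connected component of a $P_4$-sparse graph is again $P_4$-sparse, and the claim for $G$ is exactly the conjunction of the claims for its components, so it suffices to prove $\sigma_T(G)\le 3$ for connected $P_4$-sparse $G$, and if $|V(G)|\le 1$ the claim is trivial. Assume then that $G$ is connected with at least two vertices.

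Next, apply the characterization of~\cite{jamison1992tree} to the induced subgraph $H=G$ itself: exactly one of (i) $G$ is disconnected, (ii) $\overline{G}$ is disconnected, (iii) $G$ is a spider holds. Case (i) is impossible since $G$ is connected. In case (ii), let $A_1,\dots,A_p$ with $p\ge 2$ be the vertex sets of the connected components of $\overline{G}$; then in $G$ every edge between distinct $A_i$'s is present, so $G = G[A_1]\,\textcircled{\footnotesize 1}\,G[A_2\cup\cdots\cup A_p]$ is a join of two nonempty graphs and Lemma~\ref{lm:join} gives $\sigma_T(G)\le 3$. In case (iii), $G$ is a spider, and Lemma~\ref{lm:spider-index} gives $\sigma_T(G)=2$ if $G$ is a thin spider and $\sigma_T(G)=3$ if $G$ is a thick spider; in both cases $\sigma_T(G)\le 3$. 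As the three cases are exhaustive, this proves the lemma.

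I do not expect a genuine obstacle: all the real content lives in the two cited lemmas and in the trichotomy of~\cite{jamison1992tree}, and the only things that need to be stated with any care are the componentwise behaviour of $\sigma_T$ and the trivial small cases, which is precisely why I peel off the connected case at the outset. Equivalently the proof can be written as a one-line induction along the PS-tree of $G$: leaves (trivial graphs) satisfy the bound, an \textcircled{\footnotesize 0}-node preserves it componentwise, an \textcircled{\footnotesize 1}-node produces a join covered by Lemma~\ref{lm:join}, and an \textcircled{\footnotesize 2}-node produces a spider covered by Lemma~\ref{lm:spider-index}, the three node types corresponding exactly to cases (i)--(iii). The value of the lemma is thus not its difficulty but that it is the base on which the sharper, algorithmic statements that follow are built, where one must in addition decide whether $\sigma_T$ equals $2$ or $3$.
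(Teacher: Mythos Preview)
Your argument is correct and is essentially the paper's own proof: both read the root of the PS-tree (equivalently, apply the Jamison--Olariu trichotomy to $G$ itself) and then invoke Lemma~\ref{lm:join} for the join case and Lemma~\ref{lm:spider-index} for the spider case. Your version is in fact slightly more careful in explicitly peeling off the disconnected and trivial cases, which the paper handles by simply asserting that $G$ must be connected.
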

\begin{proof}
Let $T$ be the $PS$-tree of $G$ rooted in $r$. Since $G$ must be connected, then $r$ has label either $1$ or $2$. In each case, we analyze the possible root's subtrees. If $r$'s label is equal to $1$, then there is a \textcircled{\footnotesize 1} operation, and by Lemma~\ref{lm:join}, we have that $G$ is $3$-admissible. If $r$'s label is equal to $2$, then $G$ is a spider graph, and $3$-admissible by Lemma~\ref{lm:spider-index}.
\end{proof}


\begin{lemma}\label{lm:oct}[\cite{LAGOS19}]
 If $G$ is a connected $P_4$-sparse graph which is not a thin spider and without a universal vertex, then $\sigma_T(G) = 3$. 
\end{lemma}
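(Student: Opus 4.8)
The statement to prove is Lemma~\ref{lm:oct}: if $G$ is a connected $P_4$-sparse graph which is not a thin spider and without a universal vertex, then $\sigma_T(G) = 3$.

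The plan is to combine the upper bound already established in Lemma~\ref{lm:p4sparse-upper} with a matching lower bound $\sigma_T(G)\geq 3$, i.e.\ a proof that $G$ is \emph{not} $2$-admissible. Since $\sigma_T(G)\leq 3$ holds for every $P_4$-sparse graph, it suffices to rule out $\sigma_T(G) = 1$ (immediate, since a connected graph with a $2$-spanning tree equal to itself would be a tree, but a tree on $\geq 3$ vertices has a $P_4$ unless it is a star, and a star has a universal vertex, contradiction — or even simpler, if $G$ were its own spanning tree it would be acyclic and the non-universal, non-thin-spider hypotheses force enough structure to contradict this) and, the real work, $\sigma_T(G) = 2$.

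The main step is therefore the lower bound. I would proceed by structural induction on the $PS$-tree of $G$ rooted at $r$, whose label is $1$ or $2$ since $G$ is connected. \textbf{Case $r$ has label $2$:} then $G$ is a spider; being not a thin spider it is a thick spider, so by Lemma~\ref{lm:spider-index} we get $\sigma_T(G) = 3$ directly. \textbf{Case $r$ has label $1$:} then $G = G_1 \textcircled{\footnotesize 1} G_2$ for nontrivial subgraphs. Here the key observation is that if both $G_1$ and $G_2$ have at least two vertices, then $G$ is triconnected-rich enough to block a $2$-spanner: one shows that in any spanning tree $T$ of $G$, taking $x_1 y_1$ a non-edge inside $G_1$-part (or two nonadjacent vertices) and $x_2 y_2$ similarly in $G_2$, some pair of adjacent vertices ends up at distance $>2$ in $T$, unless one of the two sides is a single vertex — but a single vertex on one side of a join is exactly a universal vertex of $G$, excluded by hypothesis. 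More carefully, I would invoke Theorem~\ref{thm:caicorneil}: if $G$ is nonseparable (which a join of two nontrivial graphs is), a tree $2$-spanner must restrict to a spanning star on each triconnected component; I would identify a triconnected component of $G = G_1 \textcircled{\footnotesize 1} G_2$ containing two vertices of $G_1$ and two of $G_2$ and show it has no universal vertex, hence no spanning star, contradiction. The absence of a universal vertex in $G$ propagates to this component precisely because neither side of the join is a single vertex.

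The subtle point — and the step I expect to be the main obstacle — is the bookkeeping when $G$ is separable, i.e.\ the $PS$-tree has cut vertices so Theorem~\ref{thm:caicorneil} does not apply globally. In that situation a union operation \textcircled{\footnotesize 0} sits somewhere in the tree, $G$ is disconnected at that node, contradicting connectivity unless the \textcircled{\footnotesize 0}-node is not the root and is "absorbed" into a higher join or spider; one must argue that a connected $P_4$-sparse graph without a universal vertex that is not a thin spider always contains, as a subgraph forced to be a triconnected component or biconnected block, a configuration with no spanning star — and crucially that the cut-vertex reduction of the $O(n+m)$ algorithm sketched after Theorem~\ref{thm:caicorneil} fails. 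I would handle this by applying the block–cut-vertex analysis: the root being a join or spider means $G$ is $2$-connected, so in fact $G$ is nonseparable, Theorem~\ref{thm:caicorneil} does apply directly, and the only genuine work is exhibiting the bad triconnected component in the label-$1$ case, which reduces to the elementary fact that $K_{2,2}$ (contained in any join of two $2$-vertex graphs, as an induced or spanning substructure of a triconnected component) has no spanning star.
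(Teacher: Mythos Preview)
The paper does not actually prove Lemma~\ref{lm:oct}; it is quoted from \cite{LAGOS19} with no proof environment, so there is no in-paper argument to compare against directly. That said, the paper \emph{does} carry out essentially your label-$1$ analysis later, inside the proof of the (unlabeled) $P_4$-tidy characterization theorem: there the join case is handled by the dichotomy ``$G$ is triconnected'' (then no universal vertex $\Rightarrow$ no spanning star $\Rightarrow$ $\sigma_T(G)\geq 3$) versus ``$G$ is not triconnected'' (forcing $\min\{|V(G_1)|,|V(G_2)|\}=2$ with both $G_i$ disconnected, and then an induced long cycle is exhibited). Your overall architecture --- upper bound from Lemma~\ref{lm:p4sparse-upper}, label-$2$ root via Lemma~\ref{lm:spider-index}, label-$1$ root via Cai--Corneil plus absence of a universal vertex --- matches this line of reasoning.

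Where your plan is thinner than it needs to be is precisely the non-$3$-connected subcase you flag at the end. Reducing to ``$K_{2,2}$ has no spanning star'' is not enough on its own: when, say, $G_1=\overline{K_2}$ and $G_2$ is disconnected, the triconnected components of $G$ (in the Hopcroft--Tarjan sense, with virtual edges) can each acquire a universal vertex via the virtual edge across the $2$-cut, so no single component witnesses failure of Theorem~\ref{thm:caicorneil}. The obstruction is global consistency of the stars across components sharing the $2$-cut, which your sketch does not address. The paper's later argument sidesteps this by observing that in this residual case $G$ contains an induced $C_k$, $k\ge 4$, \emph{and} has no universal vertex, and concludes $\sigma_T(G)\ge 3$ from that; you would need either that direct cycle argument or a careful treatment of how the stars on the triconnected components can (fail to) be glued along the $2$-cut $\{a,b\}$ when $ab\notin E(G)$.
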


\begin{theorem}\label{thm:p4sparse-index}[\cite{LAGOS19}]
A $P_4$-sparse graph $G$ is $2$-admissible if and only if either $G$ has a universal vertex; or $G$ is a thin spider.
\end{theorem}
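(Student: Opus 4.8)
The plan is to prove both directions of the equivalence, leaning on the constructive (PS-tree) characterization of $P_4$-sparse graphs and on the lemmas already established. The ``if'' direction is essentially immediate: if $G$ has a universal vertex $v$, then the $v$-centered spanning star witnesses $\sigma_T(G) \le 2$, and since $G$ has an edge it cannot do better, so $\sigma_T(G) = 2$; if $G$ is a thin spider, Lemma~\ref{lm:spider-index} gives $\sigma_T(G) = 2$ directly. So the real content is the ``only if'' direction, i.e., showing that a $2$-admissible $P_4$-sparse graph that is not a thin spider must have a universal vertex.

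For the forward direction I would argue by contrapositive: assume $G$ is a connected $P_4$-sparse graph, not a thin spider, and without a universal vertex, and conclude $\sigma_T(G) \ge 3$ (which combined with Lemma~\ref{lm:p4sparse-upper} pins it to $3$ and in particular rules out $2$-admissibility). But this is exactly the statement of Lemma~\ref{lm:oct}, so the proof reduces to quoting it. To keep the argument self-contained in spirit, I would still recall why connectivity is the right hypothesis: if $G$ is disconnected it is not $2$-admissible at all (a spanning tree cannot exist), so the theorem is vacuous for disconnected graphs, and we may assume $G$ connected throughout. Then the PS-tree root of a connected $G$ is labeled $1$ or $2$; the label-$1$ case is a join $G_1 \textcircled{\footnotesize 1} G_2$, and the label-$2$ case makes $G$ a spider.

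Concretely the steps are: (1) dispatch the disconnected case as vacuous; (2) prove the ``if'' direction using the spanning star for a universal vertex and Lemma~\ref{lm:spider-index} for a thin spider, noting $\sigma_T \ge 2$ whenever $G$ has an edge; (3) for the ``only if'' direction, assume $G$ is $2$-admissible, connected, and not a thin spider, and invoke Lemma~\ref{lm:oct} in contrapositive form to force the existence of a universal vertex. I expect the main (minor) obstacle to be bookkeeping around the ``spider but not thin spider'' case: a thick spider has no universal vertex (since $|\mathcal{K}| = |\mathcal{S}| \ge 2$ and each vertex of $\mathcal{K}$ misses its $f$-partner in $\mathcal{S}$), so Lemma~\ref{lm:spider-index}'s value of $3$ there is consistent with the theorem, but one should make sure the statement's two alternatives (universal vertex / thin spider) are genuinely exhaustive and mutually compatible with the $P_4$-sparse structure — a spider with a universal vertex would have that vertex in $\mathcal{R}$, which is fine and already covered by the first alternative. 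Since all the heavy lifting is done by Lemmas~\ref{lm:spider-index},~\ref{lm:p4sparse-upper}, and~\ref{lm:oct}, the proof is short and its only real job is to assemble these pieces correctly.
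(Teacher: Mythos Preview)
Your proposal is correct and follows essentially the same approach as the paper: the ``if'' direction via a spanning star or Lemma~\ref{lm:spider-index}, and the ``only if'' direction by contrapositive using the PS-tree root label and Lemmas~\ref{lm:spider-index} and~\ref{lm:oct}. The only cosmetic difference is that the paper separates the label-$2$ (thick spider) case and cites Lemma~\ref{lm:spider-index} for it explicitly, whereas you fold this into a single appeal to Lemma~\ref{lm:oct}; since Lemma~\ref{lm:oct} already covers thick spiders, both versions are equivalent.
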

\begin{proof}
Clearly, if $G$ has a universal vertex or if $G$ is a thin spider, then $\sigma_T(G)=2$. For the converse, suppose $G$ is not a thin spider and does not have a universal vertex. So, its $PS$-tree's root has label $2$ (in this case $G$ is a thick spider) or $1$. Hence, by Lemmas~\ref{lm:spider-index} and ~\ref{lm:oct}, resp., $\sigma_T(G)=3$.~\end{proof}


\cite{giakoumakis1997p4} described a recognition algorithm to spider graphs and the corresponding triple ($\mathcal{S},\mathcal{K},\mathcal{R}$) to spider partitions in linear time. 
Hence, given a $P_4$-sparse graph, based on Theorem~\ref{thm:p4sparse-index}, we have the following theorem.

\begin{theorem}\label{thm:p4sparse-complex}
$2$-admissibility can be decided in $O(n)$ time for $P_4$-sparse graphs with $n$ vertices.
\end{theorem}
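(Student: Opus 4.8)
The plan is to reduce $2$-admissibility for a $P_4$-sparse graph $G$ to the structural characterization of Theorem~\ref{thm:p4sparse-index}: $G$ is $2$-admissible if and only if $G$ has a universal vertex or $G$ is a thin spider. It therefore suffices to test these two conditions within the stated bound, and the natural object to work on is the $PS$-tree $T$ of $G$. Since $G$ is $P_4$-sparse, the linear-time algorithm of~\cite{giakoumakis1997p4} returns $T$ together with the spider partition $(\mathcal{S},\mathcal{K},\mathcal{R})$ attached to every node of label $2$; $T$ has $O(n)$ nodes, so a single pass over $T$ costs $O(n)$. (If $G$ is supplied as adjacency lists rather than as $T$, then producing $T$ costs $O(n+m)$, and the bound in the statement should be read as the cost of the decision once the $PS$-tree is in hand.)

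First I would read the label of the root $r$ of $T$ and branch on it. If $r$ has label $0$, then $G$ is disconnected, has no spanning tree, and is not $2$-admissible -- reject. If $r$ has label $2$, then $G$ is a spider, and since $|\mathcal{S}| = |\mathcal{K}| \geq 2$ no vertex of $G$ is adjacent to all the others (vertices of $\mathcal{K}$ miss most of $\mathcal{S}$, vertices of $\mathcal{R}$ miss all of $\mathcal{S}$, vertices of $\mathcal{S}$ have degree $1$ or $|\mathcal{K}|-1$); hence $G$ has no universal vertex and, by Theorem~\ref{thm:p4sparse-index}, it is $2$-admissible if and only if it is a thin spider, which is read off directly from the partition returned at $r$. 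If $r$ has label $1$, then $G = G_1 \ \textcircled{\footnotesize 1} \ \cdots \ \textcircled{\footnotesize 1} \ G_p$ is a nontrivial join, so its complement is disconnected; since the complement of a spider is connected, $G$ is not a spider, and by Theorem~\ref{thm:p4sparse-index} it is $2$-admissible if and only if it has a universal vertex. A vertex of $V_i$ is universal in $G$ exactly when it is universal in $G_i$, and an inspection of the possible labels of the children of $r$ shows that such a vertex exists only when some child of $r$ is a single leaf: a child of label $0$ is disconnected, a child of label $2$ is a spider, and in the canonical $PS$-tree no child of a join node is again a join node. Thus in this case I accept if and only if scanning the children of $r$ turns up a leaf.

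Each of the three branches inspects $T$ at most once, which yields the $O(n)$ bound. I expect the only part needing care to be the verification that the two tests of Theorem~\ref{thm:p4sparse-index} are genuinely local on the $PS$-tree -- that ``thin spider'' is decided at the root alone, and that ``has a universal vertex'' is decided by examining only the children of the root join node; once the $PS$-tree and the spider partitions are available, both facts follow from the definitions of the operations \textcircled{\footnotesize 0}, \textcircled{\footnotesize 1}, \textcircled{\footnotesize 2}, so there is no real obstacle beyond this bookkeeping.
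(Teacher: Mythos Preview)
Your argument is correct, but it takes a different and more structural route than the paper's. The paper's proof is a one-liner on the \emph{degree sequence}: check whether some vertex has degree $n-1$ (universal) or some vertex has degree $1$ (a pendant, which in a connected $P_4$-sparse graph without a universal vertex forces the root label to be $2$ and the spider to be thin, since thick spiders with $|\mathcal{K}|\ge 3$ have no pendants and the case $|\mathcal{K}|=2$ collapses thin and thick). So the paper bypasses the $PS$-tree entirely and never needs the explicit spider partition; Theorem~\ref{thm:p4sparse-index} is invoked only to justify that these two degree checks are complete.

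Your approach instead reads the root of the $PS$-tree, branches on its label, and in the join case localizes ``has a universal vertex'' to ``some child of the root is a leaf''. This is perfectly sound --- your observations that a spider has no universal vertex, that a join is never a spider, and that in the canonical $PS$-tree a $1$-node has no $1$-child are all correct --- and it has the merit of being fully explicit about each case, whereas the paper's proof leaves the reader to reconstruct why a pendant in a connected $P_4$-sparse graph with no universal vertex really forces a thin spider. The cost is that you need the $PS$-tree as input (you acknowledge this), while the paper only needs the degree sequence; both are $O(n)$ once the appropriate object is in hand, and both silently incur $O(n+m)$ if one starts from adjacency lists.
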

\begin{proof}
By the degree sequence of a $P_4$-sparse graph $G$, we can check if there is a universal vertex or if there are pendant vertices associated to the independent set $\mathcal{S}$ of a thin spider.
\end{proof}

A natural generalization of $P_4$-sparse graphs are the $P_4$-tidy graphs. A graph $H$ is an \emph{almost-spider} graph if $H$ can be constructed from a spider graph $G = (\mathcal{S}, \mathcal{K}, \mathcal{R})$ by adding a vertex $v'$ which is either a false twin of $v$ or a true twin of $v$, such that $v \in \mathcal{S} \cup \mathcal{K}$~\cite{jamison1995p}. Hence, we call $H$ a \emph{$\mathcal{P}$-false-almost-spider} and \emph{$\mathcal{P}$-true-almost-spider}, resp., where $\mathcal{P}$ is the set to which $v$ belongs, i.e, $\mathcal{P} \in \{\mathcal{S}, \mathcal{K}\}$. In the same way, if $G$ is a thin (or thick) spider, then $H$ is a \emph{true} or \emph{false-almost-thin (or thick)-spider}. A \emph{$P_4$-tidy} graph $G$ can be constructed by the following way: i) $G_1$ \textcircled{\footnotesize 0} $G_2$, for $G_1$ and $G_2$ being $P_4$-tidy graphs; ii) $G_1$ \textcircled{\footnotesize 1} $G_2$, for $G_1$ and $G_2$ being $P_4$-tidy graphs; iii) $G$ is a spider; iv) $G$ is an almost spider; v) $G$ is $P_5$, $C_5$, $\overline{P_5}$, or $K_1$. Since $PS$-trees represent $P_4$-sparse graphs, we can develop in a similar way a tree representation of a $P_4$-tidy graph~\cite{giakoumakis1997p4}.

\begin{lemma}
 Let $G$ be an almost-spider graph, then $\sigma_T(G)\leq 3$.
\end{lemma}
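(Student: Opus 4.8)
The plan is to reduce the statement to the spider case already settled by Lemma~\ref{lm:spider-index}. Write $H$ as $G'=(\mathcal{S},\mathcal{K},\mathcal{R})$ together with a new vertex $v'$ that is a true or false twin of some $v\in\mathcal{S}\cup\mathcal{K}$, and recall that $E(H)=E(G')\cup\{v'u : u\in N_H(v')\}$ with $N_H(v')\subseteq N_{G'}(v)\cup\{v\}$. I would first choose a ``centered'' tree $3$-spanner $T'$ of $G'$: fix $k_0\in\mathcal{K}$, join $k_0$ to every other vertex of $\mathcal{K}$ and to every vertex of $\mathcal{R}$, and attach each $x\in\mathcal{S}$ to $f(x)$ if $G'$ is a thin spider and to $k_0$ (or to a fixed $k_1\in\mathcal{K}\setminus\{k_0\}$ when $f(x)=k_0$) if $G'$ is a thick spider. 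A direct check shows that $T'$ is a tree $3$-spanner of $G'$ (indeed a tree $2$-spanner in the thin case) and, crucially, that $T'$ has eccentricity at most $2$ at $k_0$, i.e.\ $d_{T'}(k_0,u)\le 2$ for every $u\in V(G')$.

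Next I would extend $T'$ to a spanning tree $T$ of $H$ by adding a single pendant edge at $v'$. Since this does not change any distance among the vertices of $G'$, and since the only edges of $H$ not already in $G'$ are those incident to $v'$, it remains only to bound $d_T(v',u)$ for $u\in N_H(v')$. Using $|\mathcal{K}|\ge 2$, I may pick $k_0$ with $k_0\ne v$ when $v\in\mathcal{K}$ and $k_0\ne f(v)$ when $v\in\mathcal{S}$. If $G'$ is thin and $v\in\mathcal{S}$, then $N_H(v')\subseteq\{f(v),v\}$, so I attach $v'$ to $f(v)$ and obtain $d_T(v',f(v))=1$ and, in the true-twin case, $d_T(v',v)=1+d_{T'}(f(v),v)=2$. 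In every other case the choice of $k_0$ guarantees $k_0\in N_{G'}(v)$, hence $k_0\in N_H(v')$, so I attach $v'$ to $k_0$; then for each $u\in N_H(v')$ we get $d_T(v',u)=1+d_{T'}(k_0,u)\le 1+2=3$, because $u\in V(G')$. Thus $\sigma_T(H)\le 3$.

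The proof is a case analysis over thin/thick, true/false twin, and $v\in\mathcal{S}$ versus $v\in\mathcal{K}$, and I expect the delicate point to be the thin-spider case with $v\in\mathcal{S}$: there $v'$ need not be adjacent to the center $k_0$ at all, so one is forced to hang $v'$ off $f(v)$ instead and to verify separately that $v$ (when $v'$ is a true twin) stays within distance $3$. In the remaining cases the conditions $k_0\ne v$ and $k_0\ne f(v)$ are exactly what make $v'$ adjacent to the center, after which the eccentricity bound on $T'$ closes the argument; everything else is routine bookkeeping.
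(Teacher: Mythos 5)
Your proof is correct, and it takes a genuinely different route from the paper. The paper's argument is a proof by exhibition: it enumerates the eight almost-spider types (thin/thick, true/false twin, twin in $\mathcal{S}$ or in $\mathcal{K}$) and displays a tree spanner for each in a figure, from which one also reads off the exact stretch ($2$ or $3$) in each case. You instead give a single uniform construction: a spanning tree of the spider centered at a well-chosen $k_0\in\mathcal{K}$ (with $k_0\neq v$ if $v\in\mathcal{K}$, $k_0\neq f(v)$ if $v\in\mathcal{S}$, possible since $|\mathcal{K}|\geq 2$), which is a tree $3$-spanner with eccentricity $2$ at $k_0$, and then you hang the twin $v'$ either off $f(v)$ (thin, $v\in\mathcal{S}$) or off $k_0$, using $N_H(v')\subseteq N_{G'}(v)\cup\{v\}$ and the eccentricity bound to cap every new edge at distance $3$. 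Your verification of the key facts checks out: $k_0\in N_{G'}(v)\subseteq N_H(v')$ in all cases other than the thin-$\mathcal{S}$ one, and the pendant attachment leaves the distances inside $G'$ untouched. What your approach buys is rigor and generality — it handles arbitrary $\mathcal{R}$ and arbitrary clique sizes without relying on the drawn patterns generalizing, and it collapses the eight cases into essentially two. What it gives up relative to the paper is the sharper per-case information: the figure records which almost-spiders are actually $2$-admissible (used later in Lemma~\ref{lm:p4tidy-upper}), whereas your tree only certifies the uniform bound $\sigma_T(H)\leq 3$, e.g.\ in the $\mathcal{K}$-true-almost-thin case your tree realizes stretch $3$ although the optimum is $2$. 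For the lemma as stated, that is immaterial.
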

\begin{proof}
Fig.~\ref{fig:p4s-pstree} depicts all almost-spider graphs and their solutions.
\begin{figure}[!h]
    \centering
    \hspace{-0.25cm}\raisebox{2cm}{a)}\includegraphics[scale=0.14]{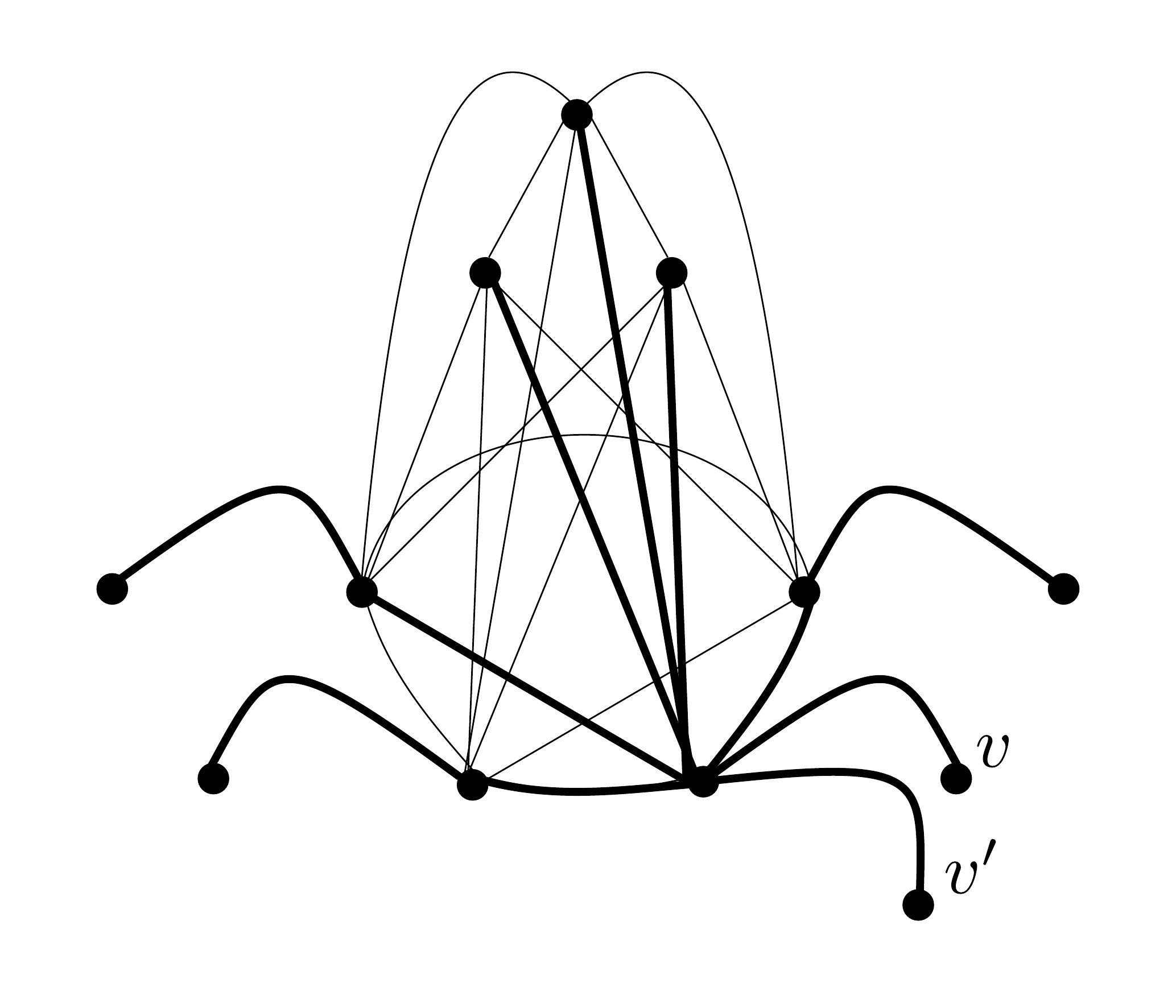}\hspace{-0.35cm}
    \raisebox{2cm}{b)}\includegraphics[scale=0.14]{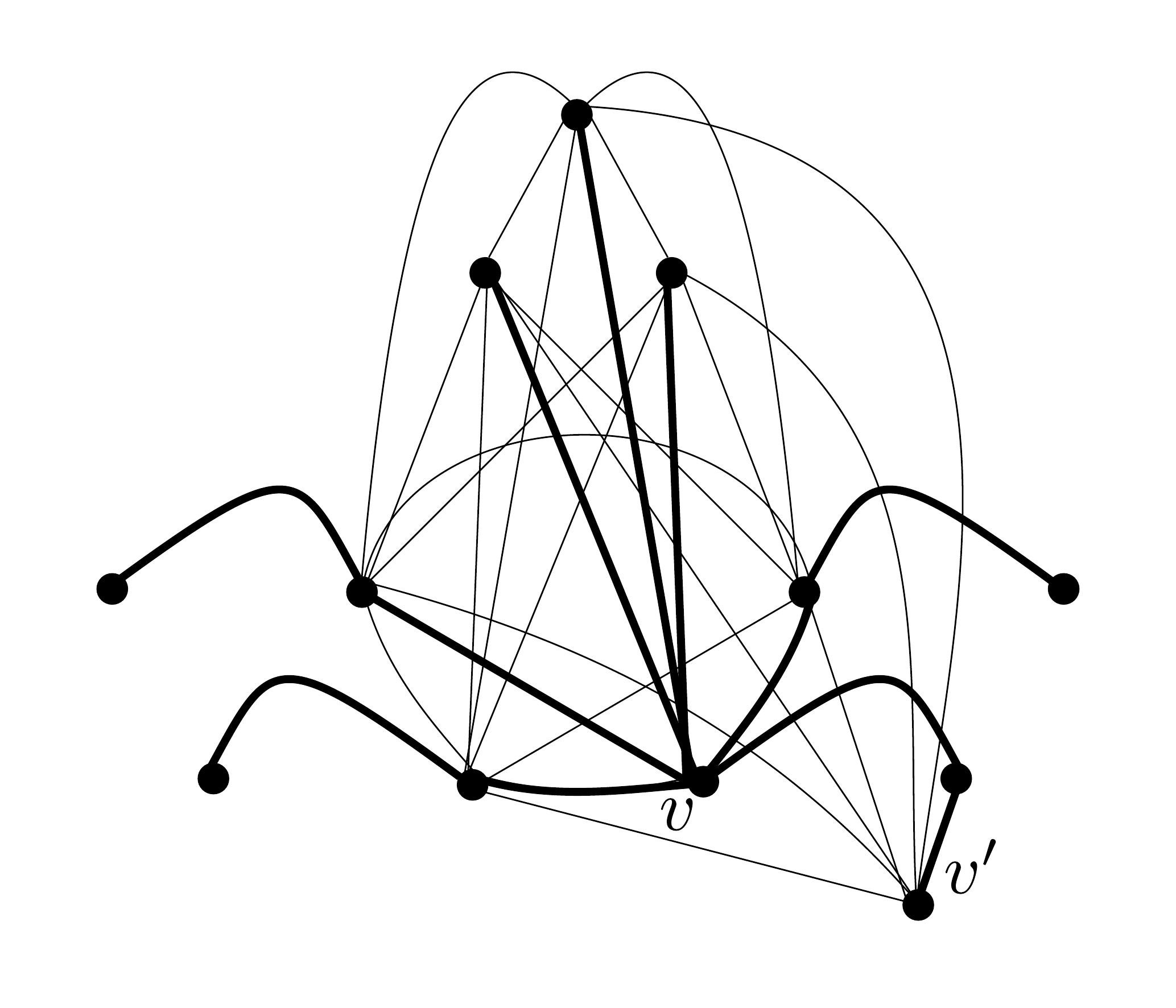}\hspace{-0.35cm}
    \raisebox{2cm}{c)}\includegraphics[scale=0.14]{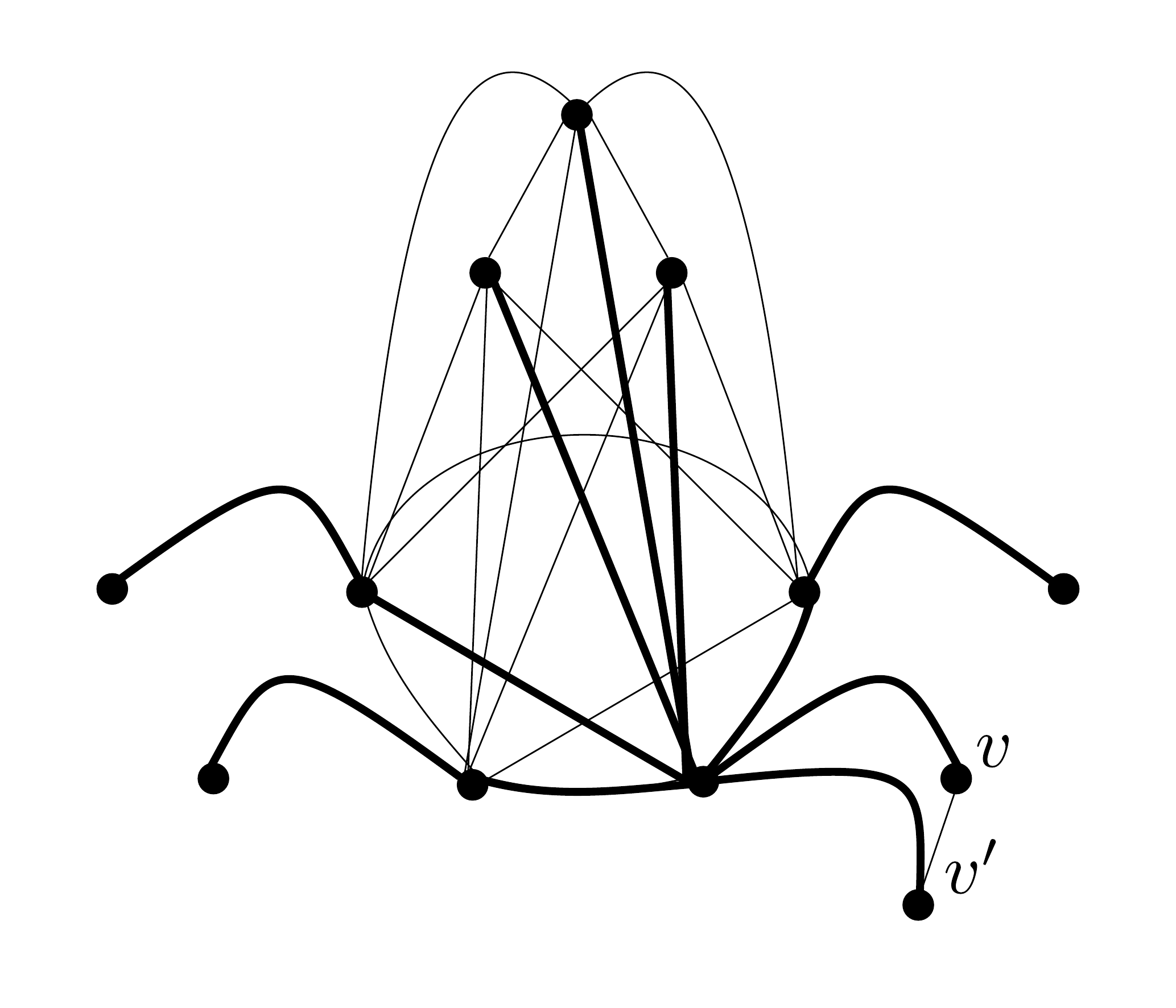}\hspace{-0.35cm}
    \raisebox{2cm}{d)}\includegraphics[scale=0.14]{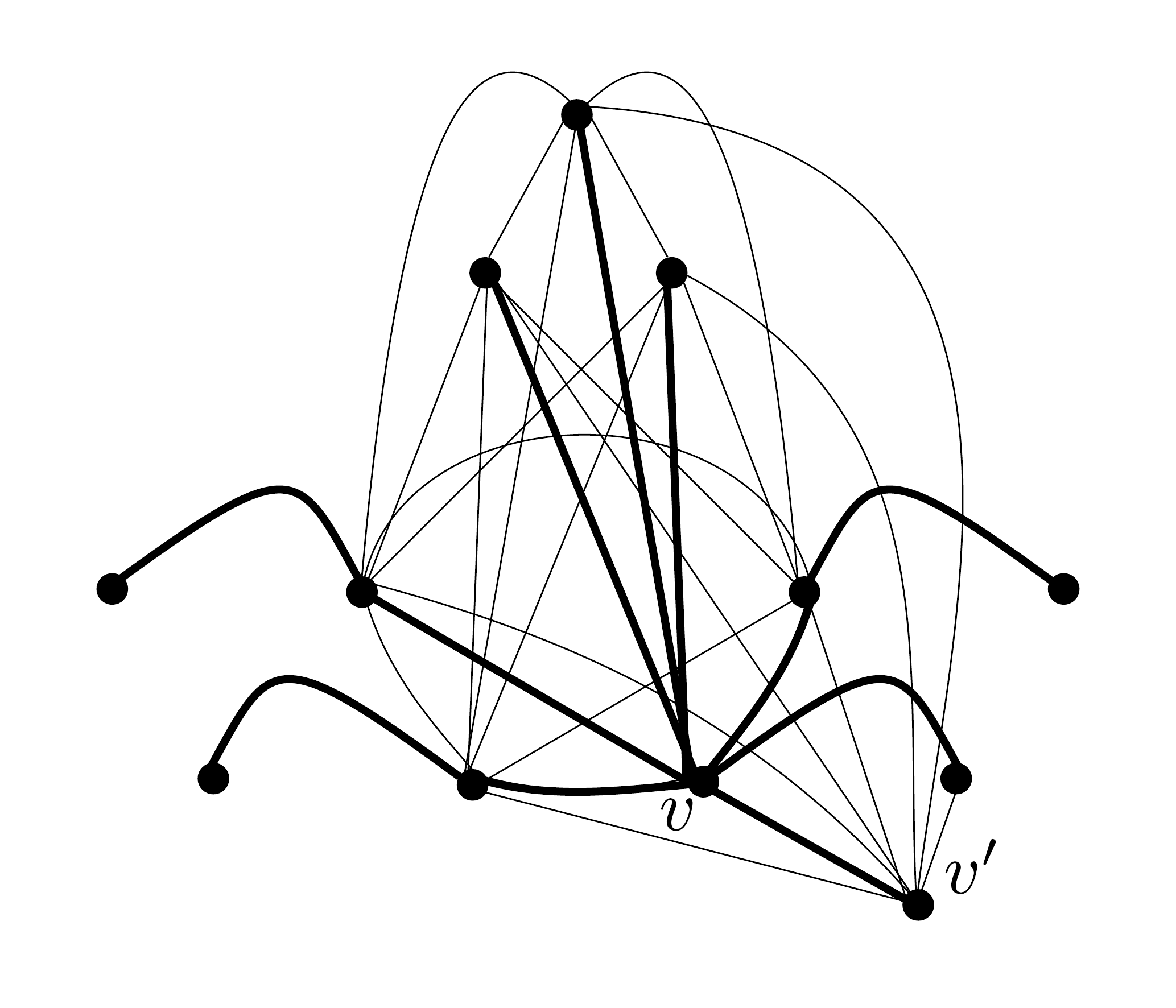}
    
    \hspace{-0.25cm}\raisebox{2cm}{e)}\includegraphics[scale=0.14]{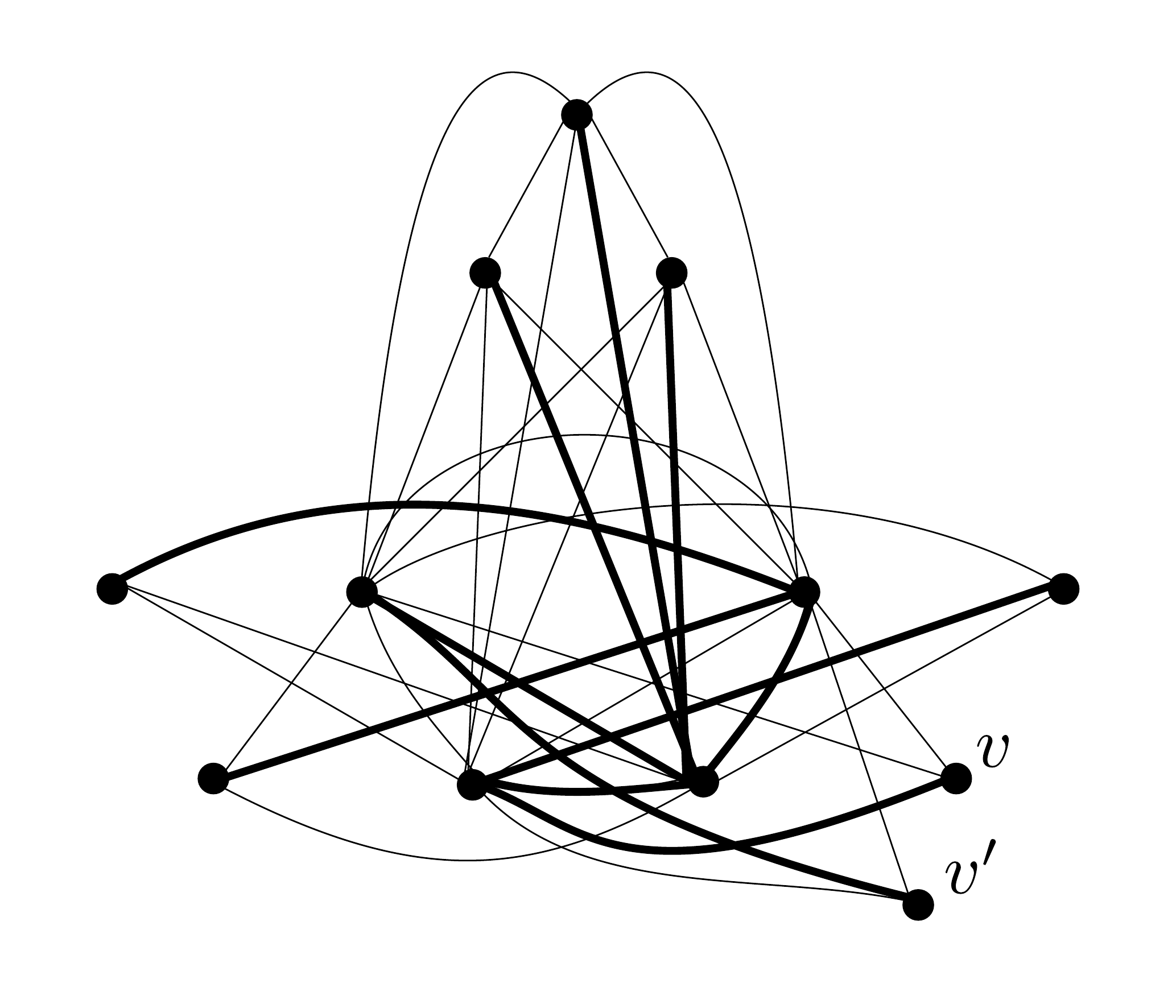}\hspace{-0.35cm}
    \raisebox{2cm}{f)}\includegraphics[scale=0.14]{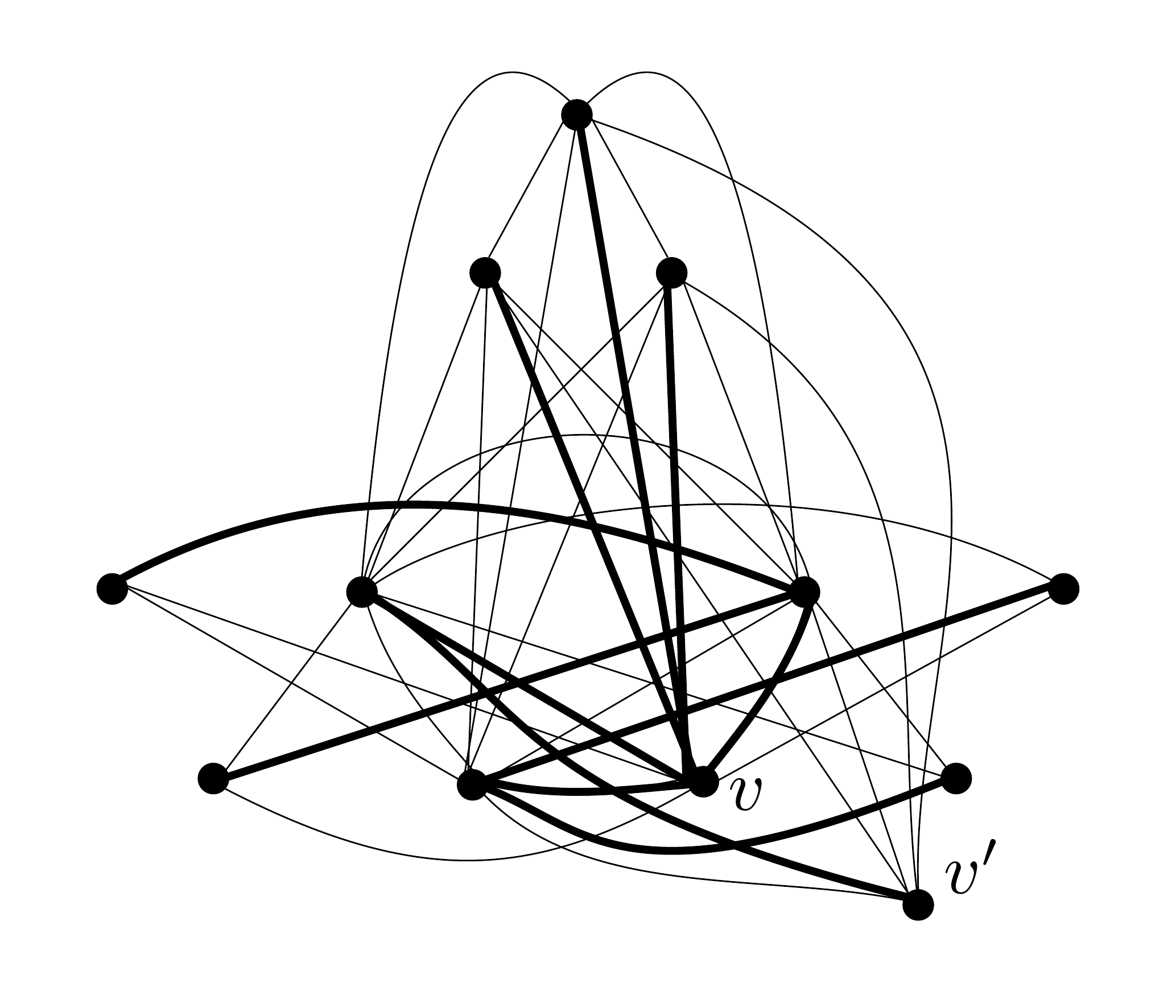}\hspace{-0.35cm}
    \raisebox{2cm}{g)}\includegraphics[scale=0.14]{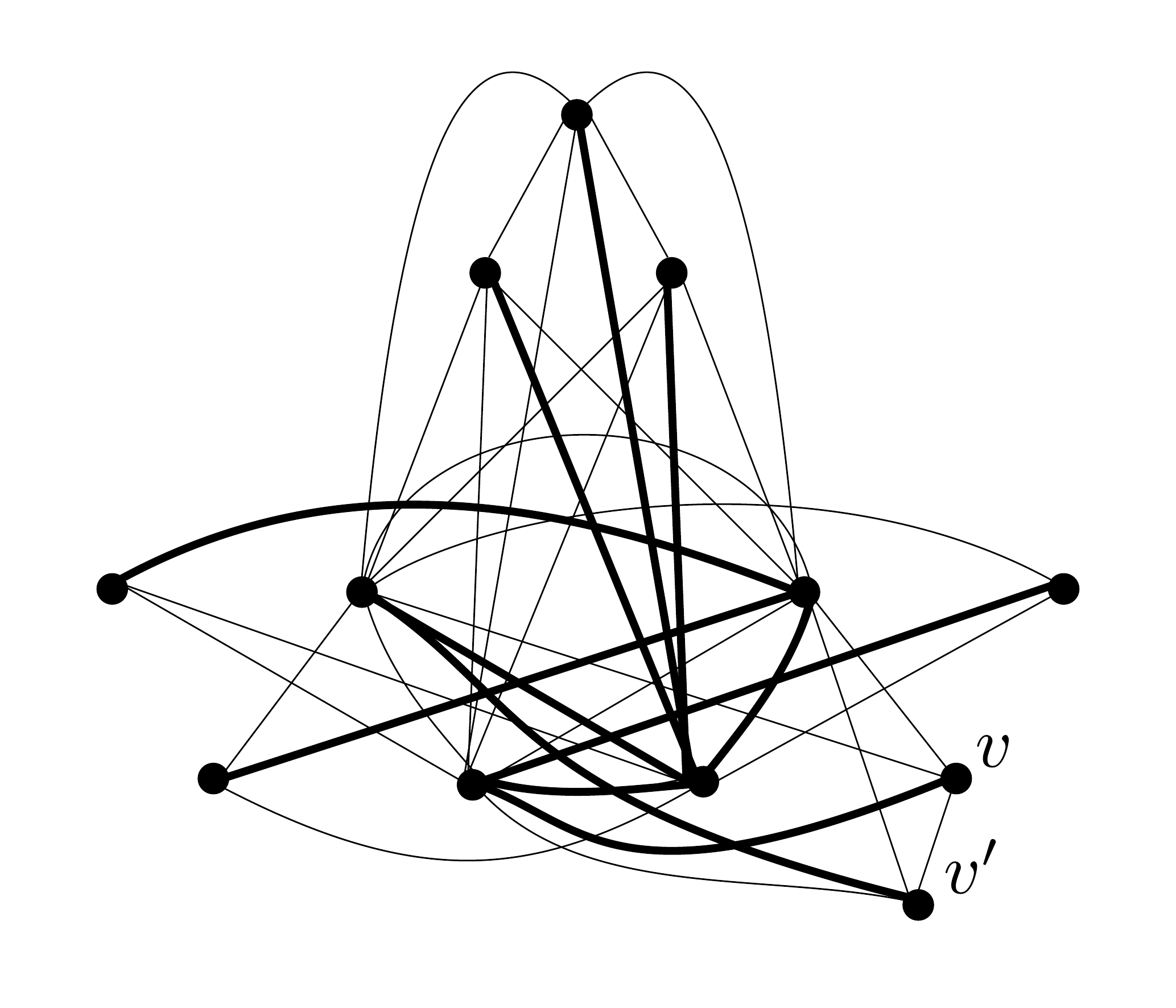}\hspace{-0.35cm}
    \raisebox{2cm}{h)}\includegraphics[scale=0.14]{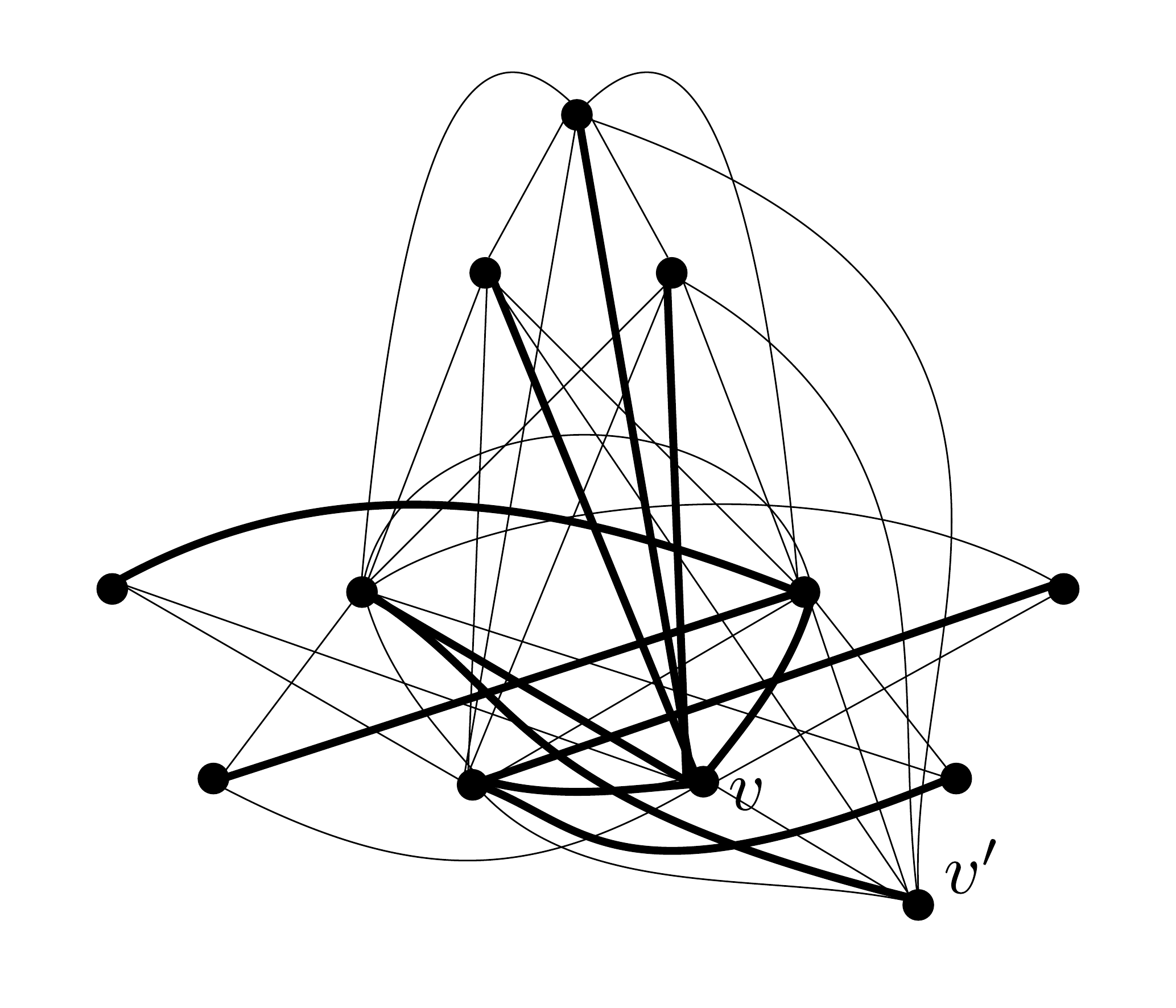}\vspace{-.4cm}
    \caption{Almost spider graph $H$ obtained from spider graph $G$, $v$ and $v'$ are twin vertices. Bold edges are 
    tree spanners of $H$. 
    a) $H$ is a $\mathcal{S}$-false-almost-thin-spider. $\sigma_T(G) = 2$ and $\sigma_T(H) = 2$; b) $\mathcal{K}$-false-almost-thin-spider. $\sigma_T(G) = 2$ and $\sigma_T(H) = 3$; c) $\mathcal{S}$-true-almost-thin-spider. $\sigma_T(G) = 2$ and $\sigma_T(H) = 2$; d) $\mathcal{K}$-true-almost-thin-spider. $\sigma_T(G) = 2$ and $\sigma_T(H) = 2$; e) $\mathcal{S}$-false-almost-thick-spider. $\sigma_T(G) = 3$ and $\sigma_T(H) = 3$; f) $\mathcal{K}$-false-almost-thick-spider. $\sigma_T(G) = 3$ and $\sigma_T(H) = 3$; g) $\mathcal{S}$-true-almost-thick-spider. $\sigma_T(G) = 3$ and $\sigma_T(H) = 3$; h) $\mathcal{S}$-true-almost-thick-spider. $\sigma_T(G) = 3$ and $\sigma_T(H) = 3$.}
    \label{fig:p4s-pstree}
\end{figure}
\end{proof}

By cases i) - v) to construct a $P_4$-tidy graph and results above, if $G$ is a $P_4$-tidy graph, then $G$ is $3$-admissible, excepted if $G$ is a $C_5$.

\begin{lemma}\label{lm:p4tidy-upper}
Let $G$ be an almost-spider graph. $G$ is $2$-admissible if and only if $G$ is a $\mathcal{S}$-almost-thin-spider (false or true), or is a $\mathcal{K}$-true-almost-thin-spider.
\end{lemma}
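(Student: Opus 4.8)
The plan is to analyze each of the eight almost-spider configurations from Figure~\ref{fig:p4s-pstree} and decide, for each, whether the stretch index can be pushed down to~$2$. The key observation driving the argument is the following structural fact about tree $2$-spanners: if $T$ is a tree $2$-spanner of a graph $H$, then every vertex $w$ that is not adjacent in $T$ to the center of some large star-like region must itself be at distance at most~$2$ in $T$ from all its $H$-neighbours; in particular, in a spider the clique $\mathcal{K}$ together with $\mathcal{R}$ forces a common neighbour (a universal-like vertex inside $\mathcal{K}\cup\mathcal{R}$) to act as the root of the spanning star on that part, and each leg vertex of $\mathcal{S}$ must attach through its unique (thin case) or co-unique (thick case) neighbour in $\mathcal{K}$. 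I would first recall from Lemma~\ref{lm:spider-index} that thin spiders have $\sigma_T=2$ while thick spiders have $\sigma_T=3$, and then examine how the addition of a twin $v'$ of a vertex $v\in\mathcal{S}\cup\mathcal{K}$ interacts with such a spanning-star structure.

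For the forward (``only if'') direction I would show that in every case not listed the graph contains an obstruction to $2$-admissibility. Concretely: (i) if $G$ is any thick spider, then already $\sigma_T(G)=3$ by Lemma~\ref{lm:spider-index}, and since $H$ contains $G$ as an induced subgraph and the stretch index is monotone under taking induced subgraphs when the subgraph is ``distance-preserving enough'' — more carefully, one argues directly that a tree $2$-spanner of $H$ restricted to $V(G)$ would yield bounded-stretch behaviour contradicting the thick-spider lower bound — all four almost-thick-spider cases (e)--(h) have $\sigma_T(H)=3$; (ii) for the $\mathcal{K}$-false-almost-thin-spider (case b), adding a false twin $v'$ of a clique vertex $v$ creates a configuration where the required spanning star of the enlarged clique-region has no valid center, forcing a path of length~$3$ between the two twins' private neighbours in $\mathcal{S}$, so $\sigma_T(H)=3$. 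For the backward direction I would exhibit, for the $\mathcal{S}$-false-, $\mathcal{S}$-true-, and $\mathcal{K}$-true-almost-thin-spider cases (a), (c), (d), the explicit bold spanning trees shown in Figure~\ref{fig:p4s-pstree}: in each the twin $v'$ is attached to the same clique vertex (or to the center) that serves $v$, all clique and $\mathcal{R}$ vertices hang off a common center, and one checks edge-by-edge that every non-edge of $H$ has its endpoints at tree-distance exactly~$2$.

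The main obstacle I anticipate is case~(b), the $\mathcal{K}$-false-almost-thin-spider: one must rule out every candidate spanning tree rather than just the ``natural'' star-based ones, since a priori some clever non-star tree on the clique side might still achieve stretch~$2$. Here I would use Theorem~\ref{thm:caicorneil}: reduce to triconnected components, observe that the relevant triconnected component (spanned by $\mathcal{K}\cup\{v,v'\}$ together with any two $\mathcal{S}$-vertices that are false twins) is not a star and has no universal vertex once the false twin $v'$ is inserted, and conclude no spanning-star structure exists there, hence no tree $2$-spanner of $H$. The remaining cases are then routine: the thick-spider cases inherit the lower bound~$3$ from Lemma~\ref{lm:spider-index} plus the general upper bound established in the preceding (unnamed) lemma, and the three positive thin cases are settled by direct inspection of the displayed trees.
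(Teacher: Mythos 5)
Your overall plan follows the paper's proof quite closely (explicit trees for the positive cases, the observation that adding a twin to a thick spider cannot help, and a Cai--Corneil triconnected-component argument for the $\mathcal{K}$-false-almost-thin-spider), but your key step for case~(b) is wrong. The relevant triconnected component is $\mathcal{K}\cup\mathcal{R}\cup\{v'\}$ and does \emph{not} contain any $\mathcal{S}$-vertices: the $\mathcal{S}$-vertex adjacent to the twins has degree~$2$ and is separated from the rest by the pair $\{u,v'\}$, so your description of the component is incorrect. More importantly, this component \emph{does} have universal vertices: every vertex of $\mathcal{K}\setminus\{u\}$ is adjacent to all of $\mathcal{K}$, all of $\mathcal{R}$, and to the false twin $v'$. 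Hence your claimed obstruction (``no universal vertex, so no spanning star, so no tree $2$-spanner'') is false --- a spanning star of the component exists, and the argument cannot stop there. The actual contradiction, which is the paper's, is one level deeper: in any such spanning star the two non-adjacent twins $u$ and $v'$ must both be leaves (neither is universal in the component), so they are at tree-distance~$2$; the single $\mathcal{S}$-vertex adjacent to both of them must hang in $T$ from one of them and is then at tree-distance~$3$ from the other, although adjacent to it in $G$. Your phrase about ``a path of length~$3$ between the two twins' private neighbours in $\mathcal{S}$'' misses this: the twins share one $\mathcal{S}$-neighbour, and the stretched edge joins that vertex to a twin.

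A secondary issue: for the almost-thick-spider cases you invoke monotonicity of the stretch index under induced subgraphs (``restrict a tree $2$-spanner of $H$ to $V(G)$''), which is not a valid principle --- the stretch index is not monotone in this sense (a cycle $C_n$ is an induced subgraph of the wheel, which is $2$-admissible, while $\sigma_T(C_n)=n-1$), and the restriction of a spanning tree to $V(G)$ need not be a spanning tree of $G$. The paper also dispatches this case briefly (asserting that adding a true or false twin to a thick spider cannot decrease the parameter), so the brevity itself is acceptable, but the justification should argue directly about the thick-spider structure with the extra twin rather than appeal to induced-subgraph inheritance.
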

\begin{proof}
Clearly, if $G$ is an almost spider, as described above, it is $2$-admissible. Conversely, suppose $G$ is a $2$-admissible almost-spider, and, by contradiction, that $G$ is either an almost-thick-spider or a $\mathcal{K}$-false-almost-thin-spider. It is easy to see that if $G$ is an almost-thick-spider the stretch index is at least $3$, since the addition of a true or false twin to a thick spider cannot decrease its parameter. So, suppose $G$ is a $\mathcal{K}$-false-almost-thin-spider by the addition of a false twin $v$ of $u \in \mathcal{K}$. Note that $H= \mathcal{K} \cup \mathcal{R} \cup \{v\}$ is a triconnected component. Since $G$ is $2$-admissible, there is a spanning tree of $G$ such that $T \cap H$ is a star. Since $v$ nor $u$ are universal in $H$, they are leaves of the star. Thus the vertex in $\mathcal{S}$ that is adjacent to both is apart from one of them in $T$ by a distance of at least $3$, a contradiction. 
\end{proof}

Similarly to Theorem~\ref{thm:p4sparse-index}, we are able to characterize the $P_4$-tidy graphs that are $2$-admissible. 

\begin{theorem}
A $P_4$-tidy graph $G$ is $2$-admissible if only only if either: $G$ has a universal vertex; or $G$ is a thin spider; or $G$ is a $\mathcal{S}$-almost-thin-spider (false or true); or $G$ is a $\mathcal{K}$-true-almost-thin-spider.
\end{theorem}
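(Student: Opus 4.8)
The plan is to follow the pattern of the proof of Theorem~\ref{thm:p4sparse-index}, now using the five-rule construction of $P_4$-tidy graphs (disjoint union \textcircled{\footnotesize 0}, join \textcircled{\footnotesize 1}, spider, almost-spider, and the finite list of prime graphs $P_5,C_5,\overline{P_5},K_1$). For the \emph{if} direction I would simply combine facts already available: a universal vertex $v$ yields the $v$-centered spanning star as a tree $2$-spanner, so $\sigma_T(G)=2$; a thin spider has $\sigma_T(G)=2$ by Lemma~\ref{lm:spider-index}; and an $\mathcal{S}$-almost-thin-spider (false or true) or a $\mathcal{K}$-true-almost-thin-spider has $\sigma_T(G)=2$ by Lemma~\ref{lm:p4tidy-upper}. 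Hence each listed graph is $2$-admissible.

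For the \emph{only if} direction, assume $G$ is $2$-admissible; then $G$ is connected, so the root of its $P_4$-tidy construction tree is not a disjoint union, and it is therefore a join \textcircled{\footnotesize 1}, or $G$ is a spider, or $G$ is an almost-spider, or $G\in\{P_5,C_5,\overline{P_5},K_1\}$. I would dispose of the latter cases first. If $G$ is a spider, Lemma~\ref{lm:spider-index} gives either a thin spider (a listed case) or a thick spider with $\sigma_T(G)=3$, contradicting $2$-admissibility. If $G$ is an almost-spider, then since $G$ is $2$-admissible, Lemma~\ref{lm:p4tidy-upper} forces $G$ to be an $\mathcal{S}$-almost-thin-spider or a $\mathcal{K}$-true-almost-thin-spider, again listed cases. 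Among the prime base graphs, $K_1$ trivially has a universal vertex, while $C_5$ and $\overline{P_5}$ are checked directly to have stretch index larger than $2$, so they cannot occur (the remaining small base graph is settled by the same direct inspection).

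It then remains to treat the case in which the root is a join, $G=G_1\textcircled{\footnotesize 1}G_2$. By Lemma~\ref{lm:join} we already have $\sigma_T(G)\le 3$, so it suffices to show that $\sigma_T(G)=2$ forces $G$ to have a universal vertex. A universal vertex of $G$ contained in $V(G_i)$ is universal in $G_i$, so $G$ has a universal vertex precisely when $G_1$ or $G_2$ does; in particular, if either $G_i$ is a single vertex we are done. Otherwise $|V(G_1)|,|V(G_2)|\ge 2$ and neither $G_i$ has a universal vertex, so each $G_i$ contains two non-adjacent vertices. I would then argue, via Theorem~\ref{thm:caicorneil}, that $G$ possesses a triconnected component that is not a spanning star, whence $G$ is not $2$-admissible, yielding the contradiction that closes the case.

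The main obstacle is exactly this last claim: that a non-trivial join of $P_4$-tidy graphs without a universal vertex has stretch index $3$ (equivalently, is not $2$-admissible). It is the only step that does not fall out mechanically from the lemmas already established; it requires either a careful description of the triconnected components of $G_1\textcircled{\footnotesize 1}G_2$ in order to invoke Cai and Corneil's characterization (Theorem~\ref{thm:caicorneil}), or an extension of the $P_4$-sparse Lemma~\ref{lm:oct} to the $P_4$-tidy setting. Everything else, namely the spider, almost-spider and finite base cases, is immediate from Lemmas~\ref{lm:spider-index} and~\ref{lm:p4tidy-upper} together with a routine check.
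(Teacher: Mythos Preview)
Your outline coincides with the paper's proof: the same case split along the $P_4$-tidy construction, the same appeals to Lemmas~\ref{lm:spider-index} and~\ref{lm:p4tidy-upper} for spiders and almost-spiders, and the same identification of the join case as the only substantive step. The idea you are missing for that case is short. The paper does not describe the triconnected components of $G=G_1\,\textcircled{\footnotesize 1}\,G_2$ in general; it simply asks whether $G$ itself is triconnected. If at least one $G_i$ is connected (and hence, since $G$ has no universal vertex, has at least three vertices), or if both $G_i$ are disconnected but $G$ is nevertheless triconnected, then $G$ is its own unique triconnected component and, lacking a universal vertex, cannot satisfy Theorem~\ref{thm:caicorneil}; thus $\sigma_T(G)\ge 3$. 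Otherwise both $G_i$ are disconnected and $G$ fails to be triconnected, which forces $\min\{|V(G_1)|,|V(G_2)|\}=2$; the paper finishes this residual case by exhibiting an induced cycle of length at least four and concluding $\sigma_T(G)\ge 3$. So no SPQR-style decomposition and no extension of Lemma~\ref{lm:oct} is needed---just the observation that a nontrivial join without a universal vertex is almost always $3$-connected.

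One genuine slip in your treatment of the prime base graphs: the ``remaining small base graph'' is $P_5$, which is a tree and hence $1$-admissible, yet it has no universal vertex and is not a (thin or almost-thin) spider. It is therefore \emph{not} disposed of by the same direct inspection that rules out $C_5$ and $\overline{P_5}$. The paper's own case list also silently omits $P_5$; the intended reading is presumably that trees are excluded, but you should say this explicitly rather than claim inspection handles it.
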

\begin{proof}
If $G$ is one of the graphs described above, then $G$ is $2$-admissible by Lemmas~\ref{lm:spider-index} and~\ref{lm:p4tidy-upper}. Conversely, suppose $G$ is a $P_4$-tidy graph $2$-admissible but it is distinct of each graph above. Hence, we have that $G$ does not have a universal vertex and it is of the following cases: i) $G$ is $\mathcal{K}$-false-almost-thin-spider; ii) $G$ is $\mathcal{S}$-almost-thick-spider (false or true); iii) $G$ is a thick spider; iv) $G$ is a join of two $P_4$-tidy graphs. If $G$ is one of cases i)-iii), then, by Lemmas~\ref{lm:oct} and~\ref{lm:p4tidy-upper}, we have a contradiction. If $G$ is a join of two $P_4$-tidy graphs $G_1$ and $G_2$, then: if $G_1$ or $G_2$ is connected, then $G$ is a triconnected graph, and since $G$ does not have a universal vertex, then $\sigma_T(G)\geq3$, a contradiction; if both $G_1$ and $G_2$ are disconnected and $G$ is a triconnected graph, then similarly to the previous case, $\sigma_T(G)\geq3$. Now, if $G_1$ and $G_2$ are disconnected and $G$ is not a triconnected graph, then $\min\{|V(G_1)|,|V(G_2)|\}=2$, and since there is not a universal vertex in $G$, then there is an induced $C_k$, $k\geq4$, in $G$, which implies $\sigma_T(G)\geq3$, a contradiction.
\end{proof}

\begin{theorem}\label{thm:p4tidy-complex}
$2$-admissibility can be decided in $O(n)$ time for $P_4$-tidy graphs with $n$ vertices.
\end{theorem}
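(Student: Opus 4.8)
The plan is to reduce the decision of $2$-admissibility for a $P_4$-tidy graph $G$ to a small number of easily-checkable conditions identified in the preceding characterization theorem, and to show each of these can be tested in linear time (indeed, in $O(n)$ time once a suitable tree decomposition is available). First I would invoke the known linear-time recognition algorithm for $P_4$-tidy graphs of \cite{giakoumakis1997p4}, which simultaneously produces the modular-decomposition-style tree representing $G$ by the operations \textcircled{\footnotesize 0}, \textcircled{\footnotesize 1}, and the spider/almost-spider/small-graph leaves. From this representation the root operation and the structure of $G$ are immediately available: we can read off whether $G$ is disconnected (then not $2$-admissible unless trivial), whether $G$ is a join of two $P_4$-tidy graphs, or whether $G$ is (the join extension of) a spider, almost-spider, $P_5$, $C_5$, or $\overline{P_5}$.

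The key steps, in order, are: (1) build the $P_4$-tidy tree representation in linear time; (2) if $G$ is disconnected, reject; (3) test for a universal vertex — by the characterization (and Lemma~\ref{lm:join} / Theorem~\ref{thm:p4tidy-complex}'s predecessor) this can be done by scanning the degree sequence for a vertex of degree $n-1$, which is $O(n)$ given the representation since degrees are determined by the tree; (4) if no universal vertex exists, test whether $G$ is a thin spider, an $\mathcal{S}$-almost-thin-spider (false or true), or a $\mathcal{K}$-true-almost-thin-spider — each of these is flagged directly by the corresponding leaf label in the tree representation, so this is a constant-time check at the root once the representation is in hand; (5) in every remaining case (thick spider, $\mathcal{K}$-false-almost-thin-spider, any almost-thick-spider, a nontrivial join without a universal vertex, or $P_5$/$C_5$/$\overline{P_5}$) reject, since by the characterization theorem $\sigma_T(G)=3$. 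Correctness of the accept/reject decisions follows verbatim from the characterization theorem proved just above, and from Lemmas~\ref{lm:spider-index}, \ref{lm:p4tidy-upper}, and~\ref{lm:oct}.

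The main obstacle, and the only place that needs genuine care, is the running-time bookkeeping: we must make sure that detecting a universal vertex and identifying which spider/almost-spider type appears does not secretly cost $O(n+m)$. This is handled by observing that the $P_4$-tidy representation stores, for each node, the operation and the sizes of the parts, so the presence of a universal vertex corresponds to a degree-$(n-1)$ leaf which can be located from the degree sequence alone (computable in $O(n)$ from the tree without listing edges), and the spider partition $(\mathcal{S},\mathcal{K},\mathcal{R})$ together with the twin type is part of the leaf data produced by \cite{giakoumakis1997p4}. Hence each test in steps (2)--(5) is $O(n)$ or $O(1)$, and the total is $O(n)$, establishing the theorem. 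I would close the proof by noting that this matches the linear bound for $P_4$-sparse graphs in Theorem~\ref{thm:p4sparse-complex} and improves on the general $O(n+m)$ algorithm of Cai and Corneil \cite{cai1995tree} for this class.
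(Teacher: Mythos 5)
Your proposal is correct and takes essentially the same approach as the paper: both reduce the decision to the characterization theorem proved just above and verify its conditions (universal vertex, thin spider, $\mathcal{S}$-almost-thin-spider, $\mathcal{K}$-true-almost-thin-spider) in $O(n)$ time using degree information and the spider partition supplied by the linear-time recognition of \cite{giakoumakis1997p4}. The only difference is bookkeeping—you route the checks through the full $P_4$-tidy tree representation, while the paper works directly from the degree sequence and the spider partition—which does not change the argument.
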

\begin{proof}
By the degree sequence of a $P_4$-tidy graph $G$, we can check if $G$ has universal vertices in $O(n)$ time. 
Additionally, from the recognition of thin spider graphs we can check if $G$ is a thin spider, by looking for pendant vertices associated to the independent set $\mathcal{S}$ of $G$.
Moreover, we can also check if $G$ is a $\mathcal{S}$-almost-thin-spider (false or true) or if it is a $\mathcal{K}$-true-almost-thin-spider in $O(n)$ time, by checking if exactly one of the vertices in $\mathcal{K}$ has a degree greater than the number of the other vertices of $\mathcal{K}$.
\end{proof}

\section{Stretch index for $(k,\ell)$-graphs}\label{sec:kl}



\cite{TCS20} settled the \NP-completeness of MSST for $(k,\ell)$-graphs, for $k + \ell \geq 3$. 
Regarding some other values of $k$ and $\ell$, 
there is no characterization on the 2-admissibility of $(0,2)$-graphs. 
$(k,\ell)$-graphs fit on the framework of MSST interesting classes, since $(0,2)$-graphs are $3$-admissible whereas for $(2,0)$-graphs the MSST is known to be \NP-complete for $t\geq 5$~\cite{brandstadt2007tree}. 
Although for several values of $k$ and $\ell$, MSST is \NP-complete, we can determine the stretch index for some $(k,\ell)$-graphs in polynomial-time.

\subsection{$(1,1)$-graphs}

A graph $G=(X,Y)$ is a \emph{split graph}, also called a $(1,1)$-graph, if and only if it can be partitioned into a clique $X$ and a stable set~$Y$. In terms of forbidden subgraphs, they are $\{2K_2, \ C_4, C_5\}$-free~graphs.

\begin{lemma}\label{lm:upperSplit}
 If $G$ is a split graph, then $\sigma_T(G) \leq 3$.
\end{lemma}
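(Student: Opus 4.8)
The plan is to exhibit, for any split graph $G=(X,Y)$ with clique $X$ and stable set $Y$, an explicit spanning tree $T$ and show that every edge of $G$ is spanned by a path of length at most $3$ in $T$. First I would pick a vertex $x_0 \in X$ of maximum degree among the clique vertices (any $x_0 \in X$ in fact works if $G$ is connected, but the max-degree choice makes the pendant handling cleanest), and build $T$ as follows: make $x_0$ adjacent in $T$ to every other vertex of $X$ (this is possible since $X$ is a clique), and then attach each vertex $y \in Y$ to $T$ via one of its neighbours in $X$ — choose $x_0$ itself whenever $x_0 y \in E(G)$, and otherwise choose an arbitrary neighbour $x(y) \in X$ of $y$. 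Since $G$ is connected, every $y \in Y$ has at least one neighbour in $X$, so $T$ is well-defined, and $T$ is clearly acyclic and spanning.

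Next I would verify the stretch bound edge by edge, according to which part each endpoint lies in. For an edge $x_i x_j$ with $x_i,x_j \in X$: both are adjacent to $x_0$ in $T$, so $d_T(x_i,x_j)\le 2$ (and $=1$ if one of them is $x_0$). For an edge $x_i y$ with $x_i \in X$, $y\in Y$: if $x_i = x(y)$ the edge is in $T$; otherwise the $T$-path is $x_i - x_0 - x(y) - y$ if $x_i\ne x_0$ and $x_0 \ne x(y)$, which has length $3$; the remaining subcases (where $x_i=x_0$, or $x(y)=x_0$) give length $1$ or $2$. There are no edges between two vertices of $Y$ since $Y$ is independent. Hence every edge of $G$ is covered by a $T$-path of length at most $3$, so $T$ is a tree $3$-spanner and $\sigma_T(G)\le 3$.

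There is essentially no hard part here — the only thing to be slightly careful about is the connectivity assumption implicit in talking about spanning trees and stretch indices (the convention throughout the paper; if $G$ is disconnected one applies the argument componentwise, or the statement is read for connected split graphs as in the rest of the section), and making sure the attachment rule for $Y$ really produces a tree rather than creating a cycle, which it does because each $y\in Y$ is given exactly one tree-neighbour and that neighbour lies in the already-built star on $X$. So the main (minor) obstacle is just organizing the case analysis on the endpoints cleanly; no nontrivial structural insight about split graphs is needed beyond the clique/stable-set partition. Alternatively, one could deduce the bound from the $2K_2$-free characterization together with a result on graphs with bounded diameter, but the direct construction above is simpler and self-contained.
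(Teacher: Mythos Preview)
Your proposal is correct and follows essentially the same approach as the paper's own proof: pick a centre vertex in the clique $X$, form a star on $X$, hang each $y\in Y$ from one of its neighbours in $X$, and then do the two-case endpoint analysis to bound the stretch by $3$. Your extra choices (taking $x_0$ of maximum degree, preferring $x_0$ as the attachment point for $y$) are harmless refinements of the paper's ``any $x\in X$'' and ``arbitrary incident edge'' rules and do not change the argument.
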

\begin{proof}
 We obtain a spanning tree $T$ for a split graph $G = (X,Y)$ as follows. Set any vertex $x$ in $X$ to be the center of a star which includes each other vertex of $X$. Next, for each vertex $y \in Y$, choose an edge incident to $y$, arbitrarily, and make $y$ a pendant in~$T$. It remains to show that the distance between two adjacent vertices $v,w$ in $G$ is at most~$3$ in~$T$. i) $v,w \in X$: since we have a star in $T$ with respect to $X$, then $d(v,w)= 2$. ii) $v \in X$, $w \in Y$: the worst case occurs when 
$d_G(w)\geq 2$ and $v$ is a leaf of the star in $T$. In this case, $d(v,w) = 3$ by the path $vxx'w$, where $x'w$ belongs to $T$.
\end{proof}
 
In Proposition~\ref{prop:2Split} we characterize the stretch indexes for split graphs.
 
\begin{proposition}\label{prop:2Split}
Let $G=(X,Y)$ be a split graph which is not a tree. Thus, $\sigma_T(G)\!=\!2$ if and only if either: i) $d_G(y) = 1, \forall \ y \in Y$, or ii) $\exists \ x \in \bigcap_{y \in Y} N_G(y),\ x~\in~X$ such that $d_G(y) \geq 2$.

\end{proposition}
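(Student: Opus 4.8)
The plan is to first recast the statement as a single equivalence. Writing $Y_{\ge 2}=\{y\in Y:\ d_G(y)\ge 2\}$, condition~(i) is exactly ``$Y_{\ge 2}=\emptyset$'' and condition~(ii) is ``$X$ meets $\bigcap_{y\in Y_{\ge 2}}N_G(y)$'', and the latter holds vacuously when $Y_{\ge 2}=\emptyset$; so it suffices to prove that $\sigma_T(G)=2$ if and only if some vertex of $X$ is adjacent to every $y\in Y_{\ge 2}$. By Lemma~\ref{lm:upperSplit} and the hypothesis that $G$ is not a tree we already have $\sigma_T(G)\in\{2,3\}$, so only this dichotomy is at stake.

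For the ``if'' direction I would fix $x^{*}\in X$ adjacent to every $y\in Y_{\ge 2}$ and build the spanning tree $T$ consisting of the star on $X$ centred at $x^{*}$ together with one pendant edge per vertex of $Y$: attach $y$ to $x^{*}$ whenever $y\in Y_{\ge 2}$ (legal, as $x^{*}\in N_G(y)$) and to its unique neighbour otherwise. Bounding the stretch of $T$ by $2$ is then routine: two vertices of $X$ are joined in $T$ through $x^{*}$, and an edge $xy$ with $x\in X$, $y\in Y$ is covered by its own pendant edge when $d_G(y)=1$ and by the length-$2$ path $x\,x^{*}\,y$ when $y\in Y_{\ge 2}$.

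For the ``only if'' direction, let $T$ be a tree $2$-spanner of $G$. I may assume $Y_{\ge 2}\neq\emptyset$, which forces $|X|\ge 2$ (a vertex of $Y$ has all its neighbours in the clique $X$), and the subcase $|X|=2$ is immediate because then $N_G(y)=X$ for every $y\in Y_{\ge 2}$, so either vertex of $X$ is adjacent to all of $Y_{\ge 2}$; so assume $|X|\ge 3$. The key tool I would isolate is an elementary fact about trees: if $S\subseteq V(T)$ has pairwise $T$-distance at most $2$, then $S\subseteq N_T[c]$ for some vertex $c$, and when $|S|\ge 3$ this $c$ is unique. The existence part follows by taking $a,b\in S$ at maximum $T$-distance --- which equals $2$, since $T$ is triangle-free and $|S|\ge 2$ --- and letting $c$ be the midpoint of the $a$--$b$ path; uniqueness follows from the observation that $|N_T[c]\cap N_T[c']|\le 2$ for distinct $c,c'$ in a tree, with equality only if $cc'\in E(T)$, in which case the intersection is exactly $\{c,c'\}$. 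Applying this to the clique $X$ gives a unique $c$ with $X\subseteq N_T[c]$; applying it to $\{y\}\cup N_G(y)$ for a fixed $y\in Y_{\ge 2}$ (whose pairs are all edges of $G$, hence at $T$-distance $\le 2$) gives a centre $c_y$. Because $N_G(y)\subseteq X\subseteq N_T[c]$ and also $N_G(y)\subseteq N_T[c_y]$ with $|N_G(y)|\ge 2$, the intersection bound leaves exactly two possibilities: either $c_y=c$, whence $y\in N_T[c]$ and so $y=c$ or $yc\in E(T)\subseteq E(G)$; or $c\neq c_y$, whence $N_G(y)=N_T[c]\cap N_T[c_y]=\{c,c_y\}$, so that $c\in N_G(y)\subseteq X$. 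Assembling the cases: if $c\in X$, the option $y=c$ cannot occur, so $c\in N_G(y)$ for \emph{every} $y\in Y_{\ge 2}$ and $c$ is the required vertex; if instead $c\in Y$, then $X\subseteq N_T[c]\setminus\{c\}$ forces $c$ to be $G$-adjacent to all of $X$, and since $c\in Y$ this gives $N_G(c)=X$, while for each $y\in Y_{\ge 2}$ the edge $yc$ is impossible inside the stable set $Y$, so necessarily $y=c$, giving $Y_{\ge 2}=\{c\}$ and $\bigcap_{y\in Y_{\ge 2}}N_G(y)=N_G(c)=X$, which again meets $X$.

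I expect the only genuinely delicate point to be the possibility that the clique-dominating vertex $c$ lands in $Y$ rather than in $X$; the way around it is to notice that such a $c$ must be $G$-adjacent to all of $X$ (so $N_G(c)=X$) and to no other vertex of $Y$, which collapses $Y_{\ge 2}$ to the single vertex $c$. Everything else is carried by the tree-centre lemma, whose purpose is exactly to show that \emph{any} tree $2$-spanner of a split graph --- not merely the obvious star-plus-pendants one --- must concentrate around one vertex dominating the clique $X$; once that is in hand, the common neighbour demanded by~(ii) is visible.
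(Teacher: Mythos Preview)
Your proof is correct and follows the same core idea as the paper: in any tree $2$-spanner of a split graph the clique $X$ must sit inside the closed $T$-neighbourhood of a single vertex, and that vertex is the common neighbour demanded by~(ii). The paper argues this by contradiction and rather tersely (asserting directly that ``the minimum stretch spanning subtree with respect to $X$ is a star'' and then invoking a somewhat loosely phrased claim about two vertices of $Y$ lacking a common neighbour), whereas you isolate the underlying fact as an explicit tree-centre lemma and apply it twice, once to $X$ and once to each maximal clique $\{y\}\cup N_G(y)$; this second application, combined with the intersection bound $|N_T[c]\cap N_T[c']|\le 2$, is what lets you pin down $c\in N_G(y)$ cleanly without any ad hoc case analysis on how $y$ attaches in $T$. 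Your treatment of the possibility $c\in Y$ (forcing $Y_{\ge 2}=\{c\}$ and $N_G(c)=X$) is also more careful than the paper's handling of that case. So: same approach, but your version supplies the structural lemma that the paper leaves implicit.
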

\begin{proof}
If $G$ satisfies i) or ii), then $G$ contains a tree $2$-spanner which can be constructed following Lemma~\ref{lm:upperSplit}, and, particularly in case ii), consider any vertex $x$ satisfying conditions required in ii) to be center of the star. Conversely, by contradiction, since $\sigma_T(G) = 2$, for each pair of vertices in $X$ there is in $T$ either an edge or a $P_3$ centered in a vertex $v$ of $G$. If $v \in X$, then the minimum stretch spanning subtree with respect to $X$ is a star. Otherwise, $v \in Y$ and each vertex of the clique would be a leaf of the star centered in $v$. Once there are two vertices in $Y$ with degree at least~$2$ without an adjacent vertex in common, in the first case, for any center of the star we have chosen regarding the clique's vertices, there is a vertex of the stable set such that all its neighbors are leaves of the star, which implies~$\sigma_T(G) \geq 3$. In the second case, $\sigma_T(G) \geq 3$ anyway, because, by hypothesis, there exist at least two more vertices in $Y$ with degree at least $2$, and they will be adjacent only to the leaves of the star centered in $v$.%
 \end{proof}

\begin{theorem}\label{thm:p4tidy-complex}
$2$-admissibility can be decided in $O(n)$ time for split graphs with $n$ vertices.
\end{theorem}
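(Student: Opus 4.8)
The plan is to turn Proposition~\ref{prop:2Split} into an algorithm by checking conditions i) and ii) directly from the split partition of $G$, since both conditions are purely local statements about degrees and common neighbourhoods. First I would recall that, given a graph, a split partition $(X,Y)$ with $X$ a clique and $Y$ an independent set can be computed in $O(n+m)$ time by the classical degree-sequence characterization of split graphs (Hammer--Simeone); in fact the whole input, being a split graph, has $m = O(n^2)$ only in the dense case, but the relevant information — the degree of every vertex and, for each $y\in Y$, its neighbourhood in $X$ — is exactly what we need. The subtlety to address is that the running time is claimed to be $O(n)$, not $O(n+m)$, so I must argue that the test can be carried out without ever materializing all of $E(G)$: it suffices to have the degree sequence and, when condition ii) is being tested, to probe adjacency of a single candidate vertex against the (at most $n$) vertices of $Y$.

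Next I would handle the degenerate cases flagged in the statement. If $G$ is a tree the proposition does not apply, but a split graph that is a tree is a star (a clique of size $\le 2$ forces it), and stars are trivially $2$-admissible; detecting ``$G$ is a tree'' is an $O(n)$ check on $m$ versus $n-1$ together with connectivity, or simply subsumed by what follows. Assuming $G$ is not a tree, the algorithm is: (a) read off $d_G(y)$ for each $y\in Y$; if all of them equal $1$, output ``$\sigma_T(G)=2$'' by case i). (b) Otherwise let $Y'=\{y\in Y : d_G(y)\ge 2\}\ne\emptyset$; pick any $y_0\in Y'$, and for each neighbour $x\in N_G(y_0)\cap X$ test whether $x$ is adjacent to every vertex of $Y'$ — equivalently whether $x\in\bigcap_{y\in Y}N_G(y)$, since vertices of degree $1$ in $Y$ impose no obstruction beyond being pendant and vertices of $X$ are already mutually adjacent. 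If some such $x$ passes, output ``$\sigma_T(G)=2$'' by case ii); otherwise output ``$\sigma_T(G)=3$'' (and this is consistent with Lemma~\ref{lm:upperSplit}, which guarantees $\sigma_T(G)\le 3$ always). Correctness is immediate from Proposition~\ref{prop:2Split}: the two conditions tested are literally i) and ii).

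The complexity bound is where the argument must be made carefully, and this is the main obstacle. A naive implementation of step (b) costs $O(|N_G(y_0)|\cdot |Y'|) = O(n^2)$ in the worst case, and even scanning the adjacency lists once is $O(n+m)$. To reach $O(n)$ one observes that $x$ lies in $\bigcap_{y\in Y}N_G(y)$ if and only if $d_G(x) = (|X|-1) + |Y''|$ where $Y''=\{y\in Y: d_G(y)\ge 1\}$ — that is, $x$ is adjacent to all of $X\setminus\{x\}$ (automatic, $X$ is a clique) and to every non-isolated vertex of $Y$; but one must be slightly more precise, since we only need $x$ adjacent to the vertices of $Y$ of degree $\ge 2$, not all of $Y$. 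So the cleanest route is: let $a = |X|$, let $b_1$ be the number of $y\in Y$ with $d_G(y)=1$ and $b_{\ge2}$ the number with $d_G(y)\ge 2$ (computable in $O(n)$ from the degree sequence once the partition is known), and declare ii) satisfied exactly when there exists $x\in X$ with $d_G(x)\ge (a-1)+b_{\ge2}$ and moreover the unique degree-$1$ vertices of $Y$ do not ``use up'' that neighbourhood count — but since a pendant $y$ with its single neighbour equal to $x$ only increases $d_G(x)$, the condition $d_G(x)\ge (a-1)+b_{\ge2}$ together with a check that $x$ is adjacent to at least $b_{\ge2}$ of the degree-$\ge2$ vertices is what is needed. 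The honest statement is that distinguishing ``$x$ adjacent to all high-degree $y$'s'' from ``$x$ adjacent to $b_{\ge2}$ vertices of $Y$ some of which are pendants'' cannot in general be done from degrees alone, so a strictly linear-in-$n$ bound requires either that the split partition comes with the lists $N_G(y)\cap X$ for $y\in Y'$ pre-sorted (a reasonable input-model assumption, cf.~the recognition algorithms cited), or an appeal to the fact that in a split graph $\sum_{y\in Y} d_G(y) = O(n)$ precisely when $Y$ contributes few edges, which is not always true. I would therefore phrase the theorem under the standard convention that $G$ is given by its split partition and adjacency lists, note that step (a) is $O(n)$ on the degree sequence, and that step (b) probes one vertex $y_0$'s neighbourhood against $Y'$ in $O(|Y'|)=O(n)$ time after choosing $y_0$ to be a minimum-degree vertex of $Y'$, so that $|N_G(y_0)|=O(n/|Y'|)\cdot|Y'|$ — in short, the proof I would write is the one the authors intend: check i) by the degree sequence, check ii) by looking at whether exactly one vertex of $X$ dominates all of $Y$, in $O(n)$ time.
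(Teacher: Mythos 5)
Your reduction of the decision to Proposition~\ref{prop:2Split} is sound (including the observation that a split graph which is a tree is a star and hence trivially $2$-admissible), but the theorem's entire content is the $O(n)$ bound, and that is exactly where your argument does not close. You correctly identify the obstruction -- condition ii) cannot be read off the degree sequence alone, because a vertex $x\in X$ of large degree may be adjacent to pendant vertices of $Y$ rather than to all $y$ with $d_G(y)\ge 2$ -- but your two escape routes do not repair it. Testing every $x\in N_G(y_0)\cap X$ against $Y'$ is $O(|N_G(y_0)|\cdot|Y'|)$, and choosing $y_0$ of minimum degree in $Y'$ gives no bound of the form $O(n/|Y'|)$ on $|N_G(y_0)|$ (take every vertex of $Y'$ adjacent to all of $X$ with $|X|=|Y'|=n/2$: the product is $\Theta(n^2)$). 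Assuming the input comes with pre-sorted lists $N_G(y)\cap X$ changes the input model rather than proving the stated theorem, and in any case scanning those lists is $\Theta(m)$ in the worst case. So as written the proposal proves an $O(n+m)$ bound, not $O(n)$.

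The idea you are missing is the paper's preprocessing step: first delete all pendant vertices of $G$. Each pendant vertex has exactly one incident edge (which must belong to every tree spanner), so with adjacency lists this deletion, together with decrementing the degree of the unique neighbour of each deleted vertex, costs $O(1)$ per pendant vertex, i.e.\ $O(n)$ in total. In the reduced graph, the degree of $x\in X$ is $(|X'|-1)+|N_G(x)\cap Y'|$, so condition ii) (and, degenerately, condition i), where the reduced graph is just the clique $X$) becomes precisely ``the reduced graph has a universal vertex,'' which is decidable from the updated degree sequence in $O(n)$ time. This pendant-removal trick is what converts your (correct) characterization-based algorithm into a genuinely linear-in-$n$ one; without it, the complexity claim is unsupported.
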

\begin{proof}
Given a split graph $G$, first we apply a preprocessing procedure by removing all pendant vertices of $G$, since these edges must belong to any tree $t$-spanner of $G$. 
The removal of such edges can be done in $O(n)$ time. 
After this step, we check the existence of a universal vertex, which can be done also in $O(n)$ by the degree sequence of $G$. 
\end{proof}




\subsection{$(0,2)$-Graphs}

$(0,2)$-Graphs are $3$-admissible, \emph{cf.}~\cite{TCS20}, and, in this work we characterize $2$-admissible $(0,2)$-graphs. 
Given a $(0,2)$-partition $X= (K^1,K^2)$ of a graph $G$, let $H_X = G[V_{H_X}]$ be $G$'s \emph{transversal subgraph with respect to $X$}, where $V_{H_X}$ is the set of vertices incident to each transversal edge of~$G$, i.e., edges with one extreme in $K^1$ and the other in $K^2$. 

\begin{lemma}\label{lm:02danca}
Let $G$ be a graph without a universal vertex and $X= (K^1,K^2)$ be a $(0,2)$-partition of $G$. If there exists a distinct $(0,2)$-partition for $G$, say $X'=(K^{1'}, K^{2'})$, then $H_X \simeq H_{X'} \simeq G$.

\end{lemma}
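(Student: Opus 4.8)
The plan is to work with the common refinement of the two given bipartitions into cliques. Put $A=K^1\cap K^{1'}$, $B=K^1\cap K^{2'}$, $C=K^2\cap K^{1'}$ and $D=K^2\cap K^{2'}$, so that $\{A,B,C,D\}$ is a partition of $V(G)$ with $K^1=A\cup B$, $K^2=C\cup D$, $K^{1'}=A\cup C$ and $K^{2'}=B\cup D$. Each of $A,B,C,D$ lies inside a clique of $G$, and moreover any two of these four blocks that together lie inside one of the cliques $K^1,K^2,K^{1'},K^{2'}$ are completely joined in $G$; explicitly, $G$ contains all edges between $A$ and $B$, between $C$ and $D$, between $A$ and $C$, and between $B$ and $D$ (only the pairs $A$--$D$ and $B$--$C$ remain unconstrained). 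Since $X\neq X'$ as partitions, a short check shows that $B$ and $C$ are not both empty and that $A$ and $D$ are not both empty.

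The first and main step is to upgrade this to the statement that \emph{all four} blocks $A,B,C,D$ are nonempty, and this is exactly where the absence of a universal vertex enters. Suppose, for instance, that $B=\emptyset$; then $C\neq\emptyset$, and any vertex $c\in C$ is adjacent to every vertex of $K^2=C\cup D$ and to every vertex of $K^{1'}=A\cup C$, hence to every vertex of $A\cup C\cup D=V(G)\setminus\{c\}$, so $c$ is universal --- a contradiction. The cases $C=\emptyset$, $A=\emptyset$ and $D=\emptyset$ are handled symmetrically: distinctness forces the ``partner'' block to be nonempty, and a vertex in that block, belonging to two of the cliques whose union is all of $V(G)$, turns out to be universal. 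I expect this four-case bookkeeping --- keeping straight which clique membership forces which adjacencies --- to be the only delicate part of the argument; everything else is a direct unwinding of the definition of the transversal subgraph.

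Once $A,B,C,D$ are all nonempty, the conclusion is immediate from the edges listed above. Every vertex of $A$ has a neighbour in $C\subseteq K^2$, every vertex of $B$ has a neighbour in $D\subseteq K^2$, every vertex of $C$ has a neighbour in $A\subseteq K^1$, and every vertex of $D$ has a neighbour in $B\subseteq K^1$; hence every vertex of $G$ is incident to a transversal edge of $X$, so $V_{H_X}=V(G)$ and $H_X=G$. Interchanging the roles of $B$ and $C$ (equivalently, running the same four observations for the partition $X'$ with $K^{1'}=A\cup C$, $K^{2'}=B\cup D$) shows likewise that every vertex is incident to a transversal edge of $X'$, so $H_{X'}=G$ as well. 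In particular $H_X\simeq H_{X'}\simeq G$, which is the claim.
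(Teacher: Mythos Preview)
Your proof is correct and follows essentially the same approach as the paper's. The paper's sets $V'\subseteq K^1$ and $V''\subseteq K^2$ (the vertices that switch sides) are exactly your $B$ and $C$, with $K^1\setminus V'=A$ and $K^2\setminus V''=D$; both arguments then read off the required transversal edges from the pairwise complete joins among these four blocks. Your version is in fact more explicit than the paper's: you spell out the case analysis showing that all four blocks $A,B,C,D$ are nonempty (using the absence of a universal vertex), whereas the paper compresses this into the single phrase ``Since there is no universal vertex in $G$'' without detailing why that forces nonemptiness on each side.
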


\begin{proof}
Let $G=(K^1 \cup K^2, E)$ be a $(0,2)$-graph and $H_X$ be its transversal subgraph with respect to $X$. Suppose there is a distinct $(0,2)$-partition for $G$, say $X'= (K^{1'},K^{2'})$. Since there is no universal vertex in $G$, there is $V' \subseteq K^1$ and $V''\subseteq K^2$ such that $(K^1 \setminus V') \cup V'' = K^{1'}$ and $(K^2 \setminus V'') \cup V' = K^{2'}$. Thus $N(V') \supseteq K^2 \setminus V''$ and $N(V'') \supseteq K^1 \setminus V'$. Therefore, there is a transversal edge incident to each vertex of $G$, and so $H_X \simeq G$. Moreover, for each $v\in V'$, $N_{K^{1'}}(v) \supseteq K^1 \setminus V'$ and for each $w \in V''$, $N_{K^{2'}}(w) \supseteq K^2\setminus V''$. Hence, there are transversal edges incident to all vertices of $G$ considering the partition $X'$, and consequently $H_{X'} \simeq G$.      
\end{proof}

By Lemma~\ref{lm:02danca}, we can assure that Lemma~\ref{lm:2adm02} provides the correct answer by the analysis of any $(0,2)$-partition. 

\begin{lemma}\label{lm:2adm02}
Let $G=(K^1\cup K^2, E)$ be a $(0,2)$-graph with the transversal subgraph of $G$ with respect to the $(0,2)$-partition $(K^1, K^2)$ given by $H$. $G$ is $2$-admissible if and only if either $G$ has a universal vertex, $G$ has a cut-vertex or $H$ is a strict $2$-connected graph that has not an induced $C_4$. 
\end{lemma}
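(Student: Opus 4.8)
The plan is to prove both directions of the characterization, leaning on the structural control that Lemma~\ref{lm:02danca} gives us: since the answer does not depend on which $(0,2)$-partition we pick, we may fix $(K^1,K^2)$ once and for all and work with its transversal subgraph $H$.

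\textbf{Sufficiency.} Suppose first that $G$ has a universal vertex $u$. Then the star centered at $u$ is a spanning tree realizing stretch $2$, so $\sigma_T(G)=2$. Next suppose $G$ has a cut-vertex $c$; since $G$ is a $(0,2)$-graph, removing $c$ leaves components each of which is contained (up to the vertex $c$) in $K^1$ or $K^2$, hence each component together with $c$ is a clique, and $c$ is universal in each such block. Building a spanning tree block by block — in each block take the star centered at $c$ — yields a tree $2$-spanner of $G$. Finally suppose $H$ is strictly $2$-connected with no induced $C_4$. Here I would use Theorem~\ref{thm:caicorneil}: I need a spanning tree $T$ of $G$ such that on every triconnected component of $G$, $T$ restricts to a spanning star. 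The two cliques $K^1$ and $K^2$ are the "large" pieces; each induces a complete graph, hence any star on them is fine. The transversal edges are what couple the two cliques, and I would argue that because $H$ is $2$-connected with no induced $C_4$, the triconnected components of $G$ that involve transversal edges are themselves small enough (essentially triangles, or handled by the $2$-connectivity-with-no-$C_4$ condition exactly as in the Cai–Corneil recognition procedure, item (ii): for a $2$-cut $\{x,y\}$ with $xy$ an edge, $xy$ lies in the tree). Concretely, I would exhibit the tree: pick $x_1\in K^1$ and $x_2\in K^2$ adjacent via a transversal edge (such an edge exists and by Lemma~\ref{lm:02danca} $H\simeq G$ when there is a second partition; otherwise handle the unique-partition case directly), center a star at $x_1$ over $K^1$, center a star at $x_2$ over $K^2$, add the edge $x_1x_2$, and then attach each remaining vertex not yet reached appropriately. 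Checking that every $G$-edge has its endpoints at $T$-distance $\le 2$ splits into: edges inside a clique (distance $2$ through the center), and transversal edges, where the absence of an induced $C_4$ in $H$ forces the configuration to collapse into a path of length $\le 2$ in $T$.

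\textbf{Necessity.} Assume $\sigma_T(G)=2$ and that $G$ has neither a universal vertex nor a cut-vertex; I must show $H$ is strictly $2$-connected and $C_4$-free. That $H$ is $2$-connected (and not a larger-connectivity phenomenon we would not call "strict") follows because $G$ is $2$-connected — a cut-vertex of $H$ would, via the clique structure outside $V_{H_X}$, give a cut-vertex or a universal vertex of $G$, contradicting our assumptions; I would spell this reduction out. For $C_4$-freeness: if $H$ contained an induced $C_4$, say $a_1 b_1 a_2 b_2$ with $a_i\in K^1$, $b_i\in K^2$, then in any spanning tree $T$ of $G$, applying Theorem~\ref{thm:caicorneil} on the triconnected component containing these four vertices forces a star, and the non-adjacent pair in the $C_4$ that are not the center end up at distance $\ge 3$ — exactly the style of contradiction used in the proof of Lemma~\ref{lm:p4tidy-upper}. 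Here the key point is that an induced $C_4$ inside the transversal subgraph cannot be "shortcut" by any clique edge, because $a_1,a_2$ being adjacent (they are both in $K^1$) does not help: the obstruction is the pair $b_1,b_2\in K^2$ whose only common neighbors may be $a_1,a_2$, and a star on the relevant component cannot keep both $b_1,b_2$ within distance $2$ of their neighbors. I would formalize this by identifying the triconnected component $\{a_1,a_2,b_1,b_2\}$ (or its overgraph) and invoking the spanning-star requirement.

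\textbf{Main obstacle.} I expect the delicate part to be the sufficiency direction in the case "$H$ strictly $2$-connected, no induced $C_4$": one must actually produce a valid tree $2$-spanner and verify that \emph{all} transversal edges — not just a convenient spanning subset — are satisfied at $T$-distance $2$, and that the triconnected-component condition of Theorem~\ref{thm:caicorneil} is met globally rather than locally. Getting the bookkeeping right for vertices of $K^1$ (resp.\ $K^2$) that have \emph{no} transversal edge — these sit purely inside one clique and must be attached to that clique's star center — while simultaneously ensuring the two star centers are joined by a transversal edge and no $C_4$ sneaks a distance-$3$ pair past us, is where the argument needs care. The clean way around it is to phrase the whole proof through the Cai–Corneil characterization (Theorem~\ref{thm:caicorneil}) and reduce to a statement purely about the triconnected components of $G$, which for $(0,2)$-graphs are tightly constrained by the clique-plus-clique structure and the $C_4$-free hypothesis on $H$.
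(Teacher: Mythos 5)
Two steps in your sufficiency argument do not hold up as written. First, the cut-vertex case: in a $(0,2)$-graph every transversal edge must be incident to a cut-vertex $c$ (otherwise $G-c$ stays connected), so with $c\in K^1$ the components of $G-c$ are $K^1\setminus\{c\}$ and $K^2$; but $c$ need not be adjacent to all of $K^2$, so your claim that ``each component together with $c$ is a clique, and $c$ is universal in each such block'' is false (take $K^1=\{c,a\}$, $K^2=\{x,y\}$ with the single transversal edge $cx$), and the block-by-block star centered at $c$ cannot be built. The paper instead takes a bi-star with one center $c$ and the other center a transversal neighbor $w\in K^2$ of $c$, which works precisely because all transversal edges go through $c$. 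Second, and more seriously, in the case ``$H$ strict $2$-connected with no induced $C_4$'' you only sketch a generic bi-star with centers $x_1\in K^1$, $x_2\in K^2$ joined by a transversal edge; but for any transversal edge $uw$ with $u\neq x_1$ and $w\neq x_2$ one gets $d_T(u,w)=3$, so the construction is only valid after one proves that the hypotheses forbid such edges or force a good choice of centers. That is exactly the content of the paper's argument, which you defer: from strict $2$-connectivity and induced-$C_4$-freeness it extracts a concrete structure (w.l.o.g.\ only two vertices $a,b$ of $K^1$ meet $K^2$, there is a $C_4$ $abcda$ with chord $ac$, and only $a$ has further neighbors in $K^2$) and then centers the bi-star at $a$ and $c$, after which every transversal edge is checked at distance at most $2$. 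Without this structural step your sufficiency proof is incomplete; acknowledging it as ``the delicate part'' does not discharge it.

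Your necessity direction is essentially the paper's: apply Theorem~\ref{thm:caicorneil} to force a spanning star on the relevant triconnected component and derive a distance-$3$ pair either from the absence of a universal vertex (when $H$ is triconnected) or from an induced $C_4$ in $H$. The paper is more explicit there, splitting into subcases according to whether $K^1$ and $K^2$ are triconnected or are $K_3$'s (so that their tree restriction is a star or a path), and you would still need to justify that $H$ being disconnected or having a cut vertex is excluded when $G$ has no cut-vertex and no universal vertex, i.e.\ that ``strict $2$-connected'' is the only remaining alternative to the two contradiction cases you treat.
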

\begin{proof}
Clearly, if $G$ has a universal vertex, then $\sigma_T(G) = 2$. If $G$ has a cut-vertex $v$, there is a bi-star with one center vertex in $v$ and the other in a neighbor of $v$ from the other clique, and $\sigma_T(G)=2$. If $H$ is a strict $2$-connected graph, $|N_{K^1}(K^2)| = 2$, w.l.g., and, once $H$ has not an induced $C_4$, there is a $C_4$ in $H$, say $abcda$ with chord $ac$, where $a,b \in K^1$. Since $H$ is strict $2$-connected, only $a$ can have other neighbors in $K^2$. We construct a tree $2$-spanner for $G$ which is a bi-star as follows: $a$ is one center of the bi-star whose leaves are $V(K^1) \setminus \{a\} \cup \{c\}$, and $c$ is the other center of the bi-star whose leaves are $V(K^2) \setminus \{c\}$. It is easy to see that $\sigma_T(G) = 2$.
For the converse, 
suppose $\sigma_T(G) = 2$, and, by contradiction, that there is not a universal nor a cut-vertex in $G$ and that $H$ is either a $3$-connected graph or is strict $2$-connected with an induced $C_4$. First, suppose $H$ is triconnected. Note that $G$ is not $3$-connected, because there is not a universal vertex in $G$. 
Since $\sigma_T(G) = 2$, there is a spanning tree $T$ of $G$ such that $T \cap H$ is a star. Whatever the center is, there is a vertex in $K^1$ or $K^2$ which is not adjacent to the center, once there is not a universal vertex in $G$, and the distance between this vertex and at least one of its neighbors of $G$ in $T$ is at least $3$. 
Now, suppose $H$ is strict $2$-connected with an induced $C_4$, say $abcda$. In this case, $G$ is clearly strict $2$-connected, and, w.l.g., $N_{K^1}(K^2)=\{a,b\}$. If $K^1$ and $K^2$ are both triconnected, then there is a spanning tree $T$ whose intersection with $K^1$ and $K^2$ are stars, thus $T$ is a bi-star and the distance in $T$ between two adjacent vertices of the $C_4$ is $3$. If both are strict $2$-connected, they are $K_3$'s, and we have $2$ possibilities: either both are treated as stars in the tree $2$-spanner, or at least one of them is a path. In both cases, the stretch index is at least $3$. If only one between $K^1$ and $K^2$ is triconnected, say $K^1$, then again there is a spanning tree $T$ whose intersection with $K^1$ is a star, and similarly to the previous arguments, the distance between two adjacent vertices of the $C_4$ is at least $3$ in~$T$. Thus, both cases lead us to contradictions.  
 \end{proof}

\begin{theorem}\label{thm:p4tidy-complex}
$2$-admissibility can be decided in $O(n+m)$ time for $(0,2)$-graphs with $n$ vertices and $m$ edges.
\end{theorem}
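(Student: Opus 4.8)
The plan is to read Lemma~\ref{lm:2adm02} as an algorithm: a $(0,2)$-graph $G$ is $2$-admissible exactly when it satisfies one of three conditions, and I will argue that each condition can be tested in $O(n+m)$ time. Before anything else, note that a $2$-admissible graph is connected, so we test connectedness in $O(n+m)$ and reject otherwise. We also need an explicit $(0,2)$-partition $(K^1,K^2)$ to work with; since a graph is a $(0,2)$-graph precisely when its complement is bipartite, such a partition is produced in $O(n+m)$ time by a breadth-first search carried out on the complement (using a not-yet-visited list), and by Lemma~\ref{lm:02danca} the answer computed below does not depend on which partition is chosen, once a universal vertex has been excluded.

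Next, test the three conditions of Lemma~\ref{lm:2adm02} in order. First, a universal vertex is detected in $O(n+m)$ by scanning the degree sequence for a vertex of degree $n-1$; if one exists, accept. Second, assuming there is no universal vertex, a cut-vertex is detected by one depth-first search computing articulation points in $O(n+m)$; if one exists, accept. Third, and otherwise, form the transversal subgraph $H=H_X$: a single pass over the edge list keeps the transversal edges (one end in $K^1$, one in $K^2$) and collects the vertices incident to them, all in $O(n+m)$; then decide whether $H$ is $2$-connected but not $3$-connected (strict $2$-connected) using the standard linear-time $2$- and $3$-connectivity tests. If $H$ is not strict $2$-connected, reject; if it is, it remains only to decide whether $H$ has an induced $C_4$.

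The crux, and the step I expect to be the main obstacle, is performing this last test in linear time --- for arbitrary graphs that is not routine, and it is only the $(0,2)$-structure together with the strict-$2$-connectedness normalization that makes it so. Since $H$ is the transversal subgraph, $V(H)\cap K^1$ equals $N_{K^1}(K^2)$, and, as used in the proof of Lemma~\ref{lm:2adm02}, strict $2$-connectedness forces one side, say $V(H)\cap K^1$, to have size exactly two, say $\{a,b\}$ with $ab\in E$; consequently every vertex of $V(H)\cap K^2$ is adjacent to $a$ or to $b$. Because $K^1$ and $K^2$ are cliques, any four vertices of $H$ with three on one side span a triangle, so an induced $C_4$ of $H$ must be $\{a,b,c,c'\}$ for two vertices $c,c'\in V(H)\cap K^2$; inspecting the four possible transversal edges among these four vertices shows that $\{a,b,c,c'\}$ induces a $C_4$ if and only if one of $c,c'$ is adjacent to $a$ but not $b$ and the other is adjacent to $b$ but not $a$. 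Hence $H$ has an induced $C_4$ exactly when $V(H)\cap K^2$ contains both a neighbour of $a$ that is not a neighbour of $b$ and a neighbour of $b$ that is not a neighbour of $a$, a property decided by a single scan of $N(a)\cup N(b)$ in $O(n+m)$ time. Assembling the steps yields the claimed $O(n+m)$ bound.
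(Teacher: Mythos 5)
Most of your algorithm matches the paper's: you test the three conditions of Lemma~\ref{lm:2adm02} with linear-time subroutines (degree scan for a universal vertex, DFS for a cut vertex, extraction of the transversal subgraph $H$, connectivity tests), and your extra steps --- producing the $(0,2)$-partition by BFS on the complement and explicitly checking ``not $3$-connected'' via a linear-time triconnectivity test --- are sound and in fact more careful than the paper's writeup. The problem is exactly the step you call the crux. Your linear-time induced-$C_4$ test rests on the premise that strict $2$-connectedness of $H$ forces one side, say $V(H)\cap K^1$, to have exactly two vertices. That premise is not a consequence of strict $2$-connectedness (even though the paper's proof of Lemma~\ref{lm:2adm02} asserts it ``w.l.g.''). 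Counterexample: let $K^1\supseteq\{a,b,c\}$, $K^2\supseteq\{d,e,f\}$ with transversal edges $ad,ae,af,bd,cd$ (and one extra transversal-free vertex in each clique so that $G$ has neither a universal vertex nor a cut vertex). Then $H$ is $2$-connected, $\{a,d\}$ is a $2$-cut, so $H$ is strict $2$-connected, yet both $V(H)\cap K^1$ and $V(H)\cap K^2$ have three vertices. On such inputs your distinguished pair $\{a,b\}$ does not exist, so the reduction of ``does $H$ contain an induced $C_4$?'' to a scan of $N(a)\cup N(b)$ is not well defined, and the claim that any induced $C_4$ must be of the form $\{a,b,c,c'\}$ with $a,b$ on one fixed side fails with it. Since this is precisely the step whose linear-time implementability is nontrivial, the proof as written has a genuine gap.

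The paper's own proof avoids this entirely: to detect an induced $C_4$ it deletes all non-transversal edges and asks whether the remaining (bipartite) transversal graph contains a $2K_2$; an induced $2K_2$ among transversal edges is equivalent to an induced $C_4$ of $H$, and testing $2K_2$-freeness of a bipartite graph (chain-graph recognition) is linear, with no assumption on the sizes of $V(H)\cap K^1$ and $V(H)\cap K^2$. Your idea can be repaired along similar structural lines: in a strict $2$-connected $H$ there exist $x\in K^1$, $y\in K^2$ covering all transversal edges (this is what a $2$-cut of $H$ amounts to, since each side is a clique), and one checks that every induced $C_4$ of $H$ must use $x$ and $y$ as opposite corners, so a neighborhood scan around $x$ and $y$ does give a linear test --- but the pair lives on opposite sides, not on the same side as in your argument. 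As submitted, the key step is justified only by a false structural premise, so the proposal does not establish the theorem.
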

\begin{proof}
Assuming that $G$ is a $(0,2)$-graph, we check the three situations of Lemma~\ref{lm:2adm02}. 
Firstly, we check the existence of a universal vertex, which can be done in $O(n)$ time by the degree sequence of $G$. 
Secondly, in order to find a cut-vertex in $G$, we use a depth-first search algorithm, which can be done in $O(n+m)$ time. 
For the third condition, we use a depth-first search algorithm to check if $H$ is $2$-connected, where $H$ is the transversal subgraph of $G$ with respect to a $(0,2)$-partition, and, in order to detect the existence of an induced $C_4$ in $H$, we delete all non transversal edges of $G$. 
After this step, if it remains a $2K_2$ subgraph, it means that the $G$ contains an induced $C_4$. This strategy can be done in $O(n+m)$ time. 
\end{proof}

\subsection{$(0,\ell)$-graphs}

\cite{LAGOS19} settled that $t$-admissibility is \NP-complete for $(0,\ell)$-graphs with $n$ vertices even if $\ell$ is a linear function on $n$. In spite of that, bounds for the stretch index on $(0,\ell)$-graphs can be obtained, as stated in Theorem~\ref{thm:LowerUpper}.

The \emph{inflation of a graph $G$} is the graph $H$ obtained by replacing each vertex $v$ of $G$ by a clique of size $d(v)$, and each edge $uv$ of $G$ by an edge between two vertices of the corresponding cliques of $H$ in such a way that the edges of $H$ which come from the edges of $G$ form a matching of $H$~\cite{favaron1998irredundance}.

The class of graphs obtained by inflation is equivalent to the line graphs of subdivided graphs~\cite{roumuloTese}. 
Note that subdivided graphs is a subclass of bipartite graph. 
\cite{ctw20} proposed the \emph{edge $t$-admissibility} problem, which consists of solving $t$-admissibility for line graphs. 
The authors proved that \emph{edge $t$-admissibility} is \NP-complete even for bipartite graphs, i.e. $t$-admissibility is \NP-complete for line graphs of bipartite graphs.

A \emph{generalized inflation of a graph $G$} is a graph $H$ obtained by replacing each vertex $v$ of $G$ by a clique of size at least $d(v)$, and the remaining construction follows similarly to the inflation of a graph~\cite{alessandraDiss}. 
Note that any generalized inflation graph $H$ obtained by a graph $G$ is a $(0,\ell)$-graph, for values of $\ell = n(G)$.

Fig.~\ref{fig:inflation} depicts generalized inflation graphs with their corresponding tree spanners.

\begin{figure}[!h]
    \centering
    \includegraphics[width=9cm]{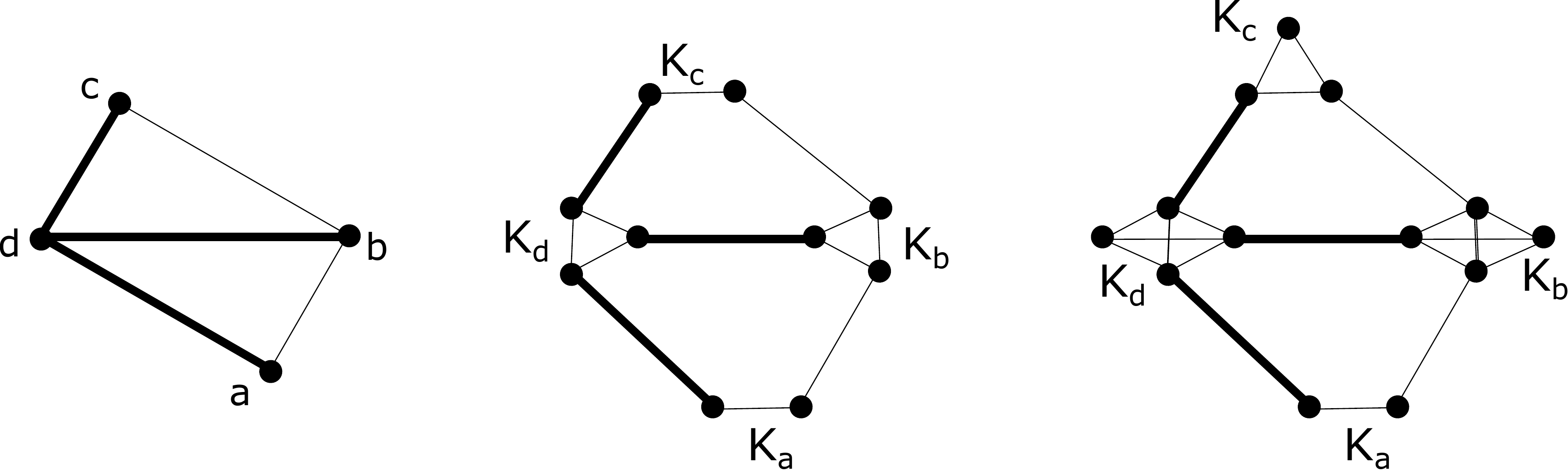}
    \caption{On the left: a connected graph $G$ whose bold edges correspond to a tree $2$-spanner $T$. On the middle and on the right: the inflation graph $H$ and a generalized inflation graph $Q$ of $G$, resp. Considering the last two graphs, bold edges of $H$ and $Q$ belong to their tree spanners from $T$. The remaining edges for the tree spanners of $H$ and of $Q$ must be stars centered in vertices adjacent to distinct cliques of graphs $H$ and $Q$.\label{fig:inflation}}
\end{figure}

\begin{lemma}\label{lm:inflado}
Any generalized inflation graph $H$ of a connected graph $G$ has stretch index equal to $\sigma(H) = 2\sigma(G) + 1$.
\end{lemma}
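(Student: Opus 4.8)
Write $t=\sigma_T(G)$, let $C_v$ be the clique of $H$ that replaces $v\in V(G)$, and for $uv\in E(G)$ let $m_{uv}$ be the unique edge of $H$ joining $C_u$ to $C_v$; contracting each $C_v$ to a point recovers $G$, so the matching edges of $H$ are in natural bijection with $E(G)$. The plan is to prove $\sigma_T(H)\le 2t+1$ and $\sigma_T(H)\ge 2t+1$ separately.

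\emph{Upper bound.} I would start from an optimal tree $t$-spanner $T_G$ of $G$ and inflate it: keep in $T_H$ every matching edge $m_e$ with $e\in E(T_G)$, and inside each clique $C_v$ add a spanning tree $S_v$ of $C_v$. Since the retained matching edges project onto the tree $T_G$, the union is a spanning tree $T_H$ of $H$, whose non-tree edges are the clique edges outside the $S_v$ and the matching edges $m_{uv}$ with $uv\notin E(T_G)$. A clique edge outside $S_v$ costs at most the diameter of $S_v$ (at most $2$ if $S_v$ is a star); the critical case is a matching edge $m_{uv}$ with $uv\notin E(T_G)$, whose $T_H$-path runs through the cliques $C_{w_0}=C_u,C_{w_1},\dots,C_{w_q}=C_v$ indexed by the $T_G$-path between $u$ and $v$ (so $q=d_{T_G}(u,v)\le t$), using $q$ matching edges together with one segment inside each $C_{w_i}$. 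Rooting $T_G$ and taking each $S_v$ to be the star of $C_v$ centered at the endpoint of the matching edge towards the parent of $v$ forces every such inside-clique segment to have length $1$, with the single exception of the turn at the least common ancestor of $u$ and $v$; handling that turn (by rooting on a longest fundamental cycle, or by replacing the star at each turning clique by a short spanning tree in which the two incident matching endpoints are adjacent) yields a path of length $q+(q+1)\le 2t+1$, whence $\sigma_T(H)\le\sigma(T_H)\le 2t+1$.

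\emph{Lower bound.} Let $T_H$ be any spanning tree of $H$ and put $s=\sigma(T_H)$. Contracting each $C_v$, the matching edges contained in $T_H$ project onto a connected spanning subgraph $\widetilde G\subseteq G$. Fix $uv\in E(G)$ with $m_{uv}\notin T_H$ (the case $m_{uv}\in T_H$ being trivial) and consider the $T_H$-path $P$ between the endpoints of $m_{uv}$: its length is at most $s$; since every vertex of $H$ meets at most one matching edge, no single-vertex clique-visit of $P$ can be interior to $P$, nor can it be the visit to the clique of an endpoint of $m_{uv}$, so $P$ spends at least one edge inside each clique it visits; hence if $P$ crosses $r$ matching edges it has length at least $r+(r+1)=2r+1\le s$, so the $u$--$v$ walk in $\widetilde G$ obtained by projecting those matching edges has length $r\le\lfloor(s-1)/2\rfloor$; thus $d_{\widetilde G}(u,v)\le\lfloor(s-1)/2\rfloor$ for every $uv\in E(G)$. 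One then extracts from $\widetilde G$ a spanning tree $T_G$ of $G$ still realizing these bounds, so $\sigma_T(G)\le\sigma(T_G)\le\lfloor(s-1)/2\rfloor$; rearranging gives $2\sigma_T(G)+1\le s$, and choosing $T_H$ optimal gives $2\sigma_T(G)+1\le\sigma_T(H)$.

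\emph{Main obstacle.} I expect the real work, on both sides, to be the intra-clique distance bookkeeping. For the upper bound it is the turn at the least common ancestor: a star on a clique puts distance $2$ between any two of its leaves, which is exactly one edge too much at a vertex where a fundamental cycle turns, so one must show that the needed departures from stars can be performed \emph{simultaneously} at all turning cliques while keeping every $S_v$ of bounded diameter even when $C_v$ is large (as generalized inflation permits), without lengthening any other distance. For the lower bound the corresponding difficulty is twofold: certifying the exact loss of two edges per intermediate vertex of $P$ together with one more from its two end cliques (this is what produces the constant $2t+1$ rather than $2t$ or $2t+2$), and then passing from the family of short $\widetilde G$-walks inherited from $T_H$-paths to a single spanning tree $T_G$ witnessing $\sigma(T_G)\le\lfloor(\sigma_T(H)-1)/2\rfloor$, which needs those walks to be mutually consistent.
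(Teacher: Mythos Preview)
Your upper-bound construction is the same as the paper's (lift an optimal $T_G$ through the matching edges, then complete each clique by a star centred at a matching endpoint), and you correctly isolate the obstacle the paper glosses over: with naive stars the detour inside the clique at the least common ancestor costs two clique edges, so a fundamental cycle of $G$-length $q$ yields a $T_H$-path of length $2q+2$, not $2q+1$. The paper's entire justification of the upper bound is the sentence ``there is a path in $T$ given by $P_{2(\sigma(G)+1)}$'', which merely exhibits one path of the target length and says nothing about the remaining non-tree edges; so the difficulty you flag is not resolved there either. Your proposed repairs (re-root along a longest fundamental cycle, or tweak the star at each turning clique so that the two relevant matching endpoints become adjacent) must be carried out \emph{simultaneously} for all non-tree edges of $G$, since different fundamental cycles turn at different cliques; this consistency is exactly the missing idea, and it is not clear it can always be achieved with stars alone.

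For the lower bound your route is genuinely different from the paper's. The paper never projects $T_H$ onto $G$; instead it lifts a fundamental $(\sigma(G)+1)$-cycle of $G$ to a $2(\sigma(G)+1)$-cycle in $H$ and asserts directly that this forces $\sigma(H)\ge 2\sigma(G)+1$. Your counting inequality $|P|\ge 2r+1$ (every matching crossing is flanked by at least one clique edge, and both end-cliques contribute one as well) is exactly the tool one would use to make such a cycle argument rigorous, and it is sharper than anything written in the paper. However, the final step of your own plan --- passing from the projected spanning subgraph $\widetilde G$ to a spanning tree $T_G$ of $G$ realizing the same stretch bound --- does not go through as stated: $\widetilde G$ acquires cycles whenever $T_H$ contains more than $|V(G)|-1$ matching edges (equivalently, whenever some $T_H\cap C_v$ is disconnected), and there is no general principle allowing one to thin a low-stretch spanning \emph{subgraph} to a low-stretch spanning \emph{tree}. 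You flag this as delicate, but it is a structural gap rather than a bookkeeping one; if you keep the projection idea you must either prove $\widetilde G$ is always a tree (it is not) or build $T_G$ from $T_H$ by some other mechanism.
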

\begin{proof}
Consider a cycle of $G$ obtained by a sequence of vertices that corresponds to a longest $\sigma(G)$-path $P_{\sigma(G)+1} = u_1, u_2, \cdots, u_{\sigma(G) +1}$ of a tree $\sigma(G)$-spanner of $G$. 

Since each added clique of $H$ corresponding to a vertex $v$ in cycle of $G$ has size at least $2$ in $H$ because $d_G(v)\geq 2$,    
then $H$ has a cycle of size $2(\sigma(G)+1)$ obtained by a sequence of vertices $u^1_1, u^2_1, u^1_2, u^2_2, \cdots, u^1_{\sigma(G)+1}, u^2_{\sigma(G)+1}, u^1_1$, where  $u_i^1$ and $u_i^2$, for $i = 1, \cdots, \sigma(G)+1$, are two vertices of the clique corresponding to $u_i \in P_{\sigma(G)+1}$. 
Since inside an added clique of a longest path in any tree $\sigma(H)$-spanner of $H$ we must consider as few edges as possible, 
then we use a unique edge for each added clique in a longest path for any tree $t$-spanner of $H$. 
It implies that in a $(2\sigma(G)+1)$-path $P_{2(\sigma(G)+1)}$ in $T$. 
Hence, $\sigma(H) \geq 2\sigma(G)+1$. 

For the upper bound, 
note that each edge of a tree $\sigma(G)$-spanner of $G$ has a corresponding edge in $H$.
Hence, we construct the spanning tree $T$ of $H$ as follows: 
add to $T$ the edges of $H$ that correspond to the edges of a tree $\sigma(G)$-spanner of $G$; 
now, for each added clique $K$ of $H$ add to $T$ a spanning star of $K$ centered in a vertex $w$ such that $d_H(w)\geq |K|$.
Therefore, note that there is a path in $T$ given by $P_{2(\sigma(G)+1)} = u^1_1, u^2_1, u^1_2, u^2_2, \cdots, u^1_{2(\sigma(G)+1)}, u^2_{2(\sigma(G)+1)}$, implying $\sigma(H) \leq 2\sigma(G)+1$.
\end{proof}

\begin{corollary}\label{cor:cycle}
If $G$ is a cycle graph $C_{\ell}$, then any generalized inflation graph of $G$ has stretch index equal to $2\ell - 1$.
\end{corollary}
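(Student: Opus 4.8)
The plan is to derive this immediately from Lemma~\ref{lm:inflado} by computing the stretch index of a cycle. First I would observe that $C_\ell$ is connected and that every one of its vertices has degree exactly $2$, so the hypothesis of Lemma~\ref{lm:inflado} (which needs $d_G(v)\geq 2$ along the relevant cycle, here all of $G$) is satisfied, and the statement $\sigma(H)=2\sigma(C_\ell)+1$ applies to any generalized inflation $H$ of $G=C_\ell$.

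Next I would establish that $\sigma_T(C_\ell)=\ell-1$. The key point is that the spanning trees of $C_\ell$ are exactly the $(\ell-1)$-paths obtained by deleting a single edge $uv$; in any such path $T$ the vertices $u$ and $v$ are the two endpoints, hence $d_T(u,v)=\ell-1$, while $d_{C_\ell}(u,v)=1$. Since this holds for every spanning tree, $\sigma_T(C_\ell)\geq \ell-1$; and the path itself realizes stretch $\ell-1$ because any two $C_\ell$-adjacent vertices are within distance $\ell-1$ in a path on $\ell$ vertices. So $\sigma(C_\ell)=\ell-1$.

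Finally, substituting into Lemma~\ref{lm:inflado} gives $\sigma(H)=2(\ell-1)+1=2\ell-1$, as claimed. There is essentially no hard step here; the only thing to be careful about is the mild boundary case $\ell=3$ (where $C_3=K_3$ and $\sigma(C_3)=2$, still matching $\ell-1$) and confirming that the degree condition in Lemma~\ref{lm:inflado} is met uniformly since every vertex of a cycle has degree two — both are routine to check.
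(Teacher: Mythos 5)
Your proposal is correct and follows the same route as the paper: the paper's proof also just notes that $\sigma(C_\ell)=\ell-1$ and plugs this into Lemma~\ref{lm:inflado} to get $2(\ell-1)+1=2\ell-1$. Your extra verification that every spanning tree of $C_\ell$ is a Hamiltonian path forcing stretch $\ell-1$ is a fuller justification of that step, but it does not change the argument.
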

\begin{proof}
Let $H$ be any generalized inflation graph of $G = C_{\ell}$. 
Note that $\sigma(C_{\ell}) = \ell -1$. 
Hence, by Lemma~\ref{lm:inflado}, $\sigma(H) = 2(\ell -1) +1 = 2\ell -1$.
\end{proof}

\begin{corollary}\label{cor:npc1}
$t$-admissibility is \NP-complete for generalized inflation graphs.
\end{corollary}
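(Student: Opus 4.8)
The plan is to reduce from $t$-admissibility on arbitrary connected graphs, which is \NP-complete for every fixed $t\ge 4$ by Cai and Corneil~\cite{cai1995tree}, and to use Lemma~\ref{lm:inflado} as the engine of the reduction. Membership in \NP is immediate and class-independent: a tree $t$-spanner is a certificate of size $O(n)$, and checking that a given spanning tree has stretch at most $t$ reduces to a bounded search from the endpoints of each non-edge, hence is polynomial.

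For hardness, fix $t=4$ and let the connected graph $G$ be an instance of $4$-admissibility. Since every graph with $\sigma_T(G)\le 4$ that is relevant to the hardness is far from being a tree (trees have $\sigma_T=1$), $G$ contains a cycle and the hypotheses of Lemma~\ref{lm:inflado} are satisfied. In polynomial (indeed linear) time, build the inflation $H=L(S(G))$ of $G$; this is a generalized inflation graph of $G$, since replacing each $v$ by a clique of size exactly $d_G(v)$ is a legitimate instance of the ``at least $d_G(v)$'' rule. By Lemma~\ref{lm:inflado}, $\sigma_T(H)=2\,\sigma_T(G)+1$, so $\sigma_T(G)\le 4$ if and only if $\sigma_T(H)\le 9$. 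Thus $G\mapsto (H,9)$ is a polynomial many-one reduction from $4$-admissibility to $9$-admissibility restricted to generalized inflation graphs, which is therefore \NP-hard; together with membership in \NP this gives \NP-completeness, and a fortiori $t$-admissibility is \NP-complete for generalized inflation graphs. Running the same argument with any fixed $t\ge 4$ shows, more precisely, that $t'$-admissibility is \NP-complete on this class for every odd $t'\ge 9$ (and Corollary~\ref{cor:cycle} illustrates the identity on the concrete family of inflated cycles).

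I expect the only delicate points to be bookkeeping rather than mathematics: one must confirm that the plain inflation, and not merely a strictly ``larger'' generalized inflation, already lies in the target class — so that no padding and no extra case analysis is needed — and one must make sure the reduction never outputs a $G$ violating the connectivity/cycle assumption behind Lemma~\ref{lm:inflado}, which is automatic because the source problem's hard instances are not trees. There is no genuine obstacle here: all of the structural work has been done in Lemma~\ref{lm:inflado}, and this corollary is essentially the observation that that identity transports known hardness for general graphs to hardness for generalized inflation graphs.
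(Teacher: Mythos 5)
Your proposal is correct and follows essentially the same route as the paper: it uses Lemma~\ref{lm:inflado} as a many-one reduction mapping a connected instance $G$ of $t$-admissibility ($t\ge 4$, \NP-complete by Cai and Corneil) to its inflation $H$, a generalized inflation graph, with the threshold translated to $2t+1$. Your additional remarks on \NP membership and on handling tree instances are minor bookkeeping refinements of the same argument, not a different approach.
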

\begin{proof}
By Lemma~\ref{lm:inflado}, $G$ is $t$-admissible if and only if its inflation graph $H$ is $(2t+1)$-admissible. Since $t$-admissibility is \NP-complete for an arbitrary connected graph $G$, then it is \NP-complete for its inflation graph $H$, which is equivalent to line graph of the subdivision of $G$, as proved by~\cite{alessandraDiss}.
\end{proof}

\begin{corollary}
$t$-admissibility is \NP-complete for line graphs of subdivided graphs.
\end{corollary}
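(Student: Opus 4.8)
The plan is to obtain this statement directly from Corollary~\ref{cor:npc1} together with the structural identity, recalled above from~\cite{roumuloTese,alessandraDiss}, that a graph is the \emph{inflation} of some graph if and only if it is the line graph of a subdivided graph. The only thing to notice at the outset is that ordinary inflation is the special case of generalized inflation in which each vertex $v$ of $G$ is expanded into a clique of size \emph{exactly} $d(v)$; hence Lemma~\ref{lm:inflado} applies to it without change, giving $\sigma(H)=2\sigma(G)+1$ whenever $G$ is connected and $H$ is its inflation.

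With this in hand, the first step is to set up the reduction. Given a connected graph $G$, I would form its subdivision and then its line graph, obtaining a graph $H$ which is precisely the inflation of $G$ and which is constructible from $G$ in polynomial time. The equality $\sigma(H)=2\sigma(G)+1$ then yields, for every integer $t$, that $G$ is $t$-admissible if and only if $H$ is $(2t+1)$-admissible.

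The second step is to transfer hardness across this reduction. By Cai and Corneil~\cite{cai1995tree}, $t$-admissibility is \NP-complete on arbitrary connected graphs for every $t\geq 4$, so composing with the map $G\mapsto H$ shows that $(2t+1)$-admissibility is \NP-hard on line graphs of subdivided graphs; equivalently, $t'$-admissibility is \NP-hard on this class for every odd $t'\geq 9$. Membership in \NP\ is immediate, since a tree $t'$-spanner is a polynomial-size certificate whose validity reduces to checking a distance bound between every pair of adjacent vertices. This establishes \NP-completeness.

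I do not expect a real obstacle here: the two points that need care are merely (i) confirming that the inflation of $G$ is genuinely a line graph of a subdivided graph and that the transformation is polynomial, both already recorded above, and (ii) observing that Lemma~\ref{lm:inflado}, although stated for generalized inflations, subsumes ordinary inflation, so no fresh cycle-length or degree analysis is needed. It is worth remarking that, since the reduction produces odd target stretch values on line graphs of a bipartite class (subdivided graphs are bipartite), the corollary places $t$-admissibility in close correspondence with the edge $t$-admissibility problem of~\cite{ctw20}.
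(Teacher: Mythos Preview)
Your proposal is correct and follows essentially the same route as the paper: the paper derives this corollary directly from Corollary~\ref{cor:npc1} together with the equivalence between inflation graphs and line graphs of subdivided graphs (cited from~\cite{roumuloTese,alessandraDiss}), which is exactly what you do. Your write-up is simply more explicit about membership in \NP, the polynomial-time construction, and the specific odd stretch values produced by the reduction.
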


Note that generalized inflation graphs are a proper subset of $(0,\ell)$-graphs. 
Besides of that, for any $(0,\ell)$-graph $H$ we can transform it into a graph $G$, that we call $G$ a \emph{subjacent graph of $H$} in the following way: by transforming each clique $K$ of the $(0,\ell)$ partition of $H$ into a unique vertex $u_K$ of $G$, and for each set of edge between cliques $K$ and $K'$ of $H$ we consider an edge $u_Ku_{K'}$ of $G$.

\begin{theorem}\label{thm:LowerUpper}
Let $G$ be a $(0,\ell)$-graph distinct of a tree. Hence, $2 \leq \sigma(G) \leq 2\ell-1$.
\end{theorem}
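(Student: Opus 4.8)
The plan is to prove the two inequalities separately, the lower one being immediate and the upper one being the substance. For the lower bound: a connected graph has tree stretch index $1$ exactly when it equals its (unique) spanning tree, i.e.\ when it is a tree; since $G$ is connected but not a tree, $\sigma(G)\ge 2$.

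For the upper bound I would exhibit a spanning tree of stretch at most $2\ell-1$, built from the \emph{subjacent graph} $F$ of $G$ introduced just before the statement: $F$ is connected with $|V(F)|\le\ell$, one vertex $u_K$ per clique $K$ of the $(0,\ell)$-partition, and since $|V(F)|=1$ only when $G$ is a clique (where $\sigma(G)=2\le 2\ell-1$) we may assume $|V(F)|\ge 2$. Fix a spanning tree $T_F$ of $F$ and \emph{root it at a leaf} $\rho$. For each tree-edge $u_Ku_{K'}$ of $T_F$ pick a single edge of $G$ between the cliques $K$ and $K'$; declare the centre $c_K$ of a non-root clique $K$ to be the endpoint, in $K$, of the edge picked for the tree-edge joining $K$ to its parent, and let $c_\rho$ be the endpoint in $\rho$ of the edge picked for $\rho$'s unique tree-edge. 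Let $T$ consist of all picked edges together with, for every clique $K$, a spanning star of $K$ centred at $c_K$; checking that $T$ is a spanning tree of $G$ (connected, acyclic, spanning — it realizes $T_F$ on the clique-nodes and links each clique through its centre) is routine but needs care.

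It then remains to bound $d_T(x,y)$ for $xy\in E(G)$. If $x,y$ share a clique then $d_T(x,y)\le 2$. Otherwise $x\in K_x$, $y\in K_y$ with $u_{K_x}u_{K_y}\in E(F)$; let $L_0=K_x,L_1,\dots,L_d=K_y$ be the cliques along the $T_F$-path between $u_{K_x}$ and $u_{K_y}$, so $d=d_{T_F}(u_{K_x},u_{K_y})\le|V(F)|-1\le\ell-1$, and let $L_j$ be the node of this path closest to $\rho$. Because $T$ is a tree whose only link between "before $L_i$" and "after $L_i$" is the picked cross-edge, the unique path from $x$ to $y$ in $T$ uses each of the $d$ picked cross-edges once and, inside each $L_i$, the star-distance between the two ports it meets there ($x,y$ serving as ports in $L_0,L_d$). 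For every $L_i\neq L_j$ the centre choice makes this inside cost at most $1$: on the side of $L_j$ facing $x$ the parent of $L_i$ is $L_{i+1}$, so the exit port is $c_{L_i}$; on the side facing $y$, symmetrically, the entry port is $c_{L_i}$; and if $L_j=\rho$ — which can happen only for $j\in\{0,d\}$, since a leaf has a single child — the unique cross-edge at $\rho$ ends at $c_\rho$. Hence if $L_j=\rho$ every clique costs at most $1$ and $d_T(x,y)\le(d+1)+d=2d+1\le 2\ell-1$. If $L_j\neq\rho$, only the LCA clique may cost $2$, so $d_T(x,y)\le 2d+2$; but then $L_0,\dots,L_d$ together with the parent of $L_j$ (which lies off the path) are $d+2$ distinct vertices of $F$, whence $d\le\ell-2$ and $d_T(x,y)\le 2\ell-2<2\ell-1$.

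The genuinely delicate point, and what pins the constant to $2\ell-1$ rather than $2\ell$, is the cost inside the lowest common ancestor clique: a single star cannot cheaply serve three distinct ports, so that clique threatens an extra unit. Rooting $T_F$ at a leaf removes this loss whenever the LCA is the root, and the count $d+2\le\ell$ shows that otherwise the path $L_0,\dots,L_d$ was already short enough to absorb it; this is the step I expect to need the most care, everything else being direct. (For generalized inflation graphs this re-derives, with a weaker constant, the upper part of Lemma~\ref{lm:inflado}, since there $\sigma(G)=2\sigma(F)+1\le 2(\ell-1)+1$.)
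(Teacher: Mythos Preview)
Your proof is correct and considerably cleaner than the paper's. The paper argues the upper bound by splitting $(0,\ell)$-graphs into generalized inflation graphs and the rest, and for each case invokes the equality $\sigma(H)=2\sigma(\text{subjacent graph})+1$ from Lemma~\ref{lm:inflado}, then argues that the subjacent graph maximising this is a cycle, giving $2(\ell-1)+1=2\ell-1$; for the non-inflation case the equality is asserted ``by the same argument given in Lemma~\ref{lm:inflado}'' and the ensuing worst-case reasoning is informal. Your route bypasses all of this: you build an explicit spanning tree of $G$ directly from a spanning tree of the subjacent graph $F$ (stars inside the cliques, glued along one picked cross-edge per $T_F$-edge) and bound every non-edge path by hand. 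The leaf-rooting trick --- ensuring that whenever the LCA clique might contribute $2$ rather than $1$, its parent lies off the path $L_0,\dots,L_d$ and hence $d\le\ell-2$ --- is exactly what pins the constant at $2\ell-1$ instead of $2\ell$; the paper reaches the same constant only indirectly, through the cycle-maximises-stretch step. Your argument is more self-contained (it needs neither Lemma~\ref{lm:inflado} nor Corollary~\ref{cor:cycle}) and yields an explicit construction; the paper's approach, in exchange, connects the upper bound more transparently to the tight examples of Corollary~\ref{cor:charactUpper}.
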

\begin{proof}
For the lower bound, note that any complete graph $K_{\ell'}$ is a $(0,\ell)$-graph, $\ell'\geq \ell$, and $\sigma(K_{\ell'})=2$. Hence, for any $(0,\ell)$-graph $G$, $\sigma(G)\geq 2$. 
For the upper bound, we first analyze the structure of any $(0,\ell)$-graph.
We can partition the set of $(0,\ell)$-graphs into two sets: i) the subset $A$ of the $(0,\ell)$-graphs which are generalized inflation graphs; ii) the subset $B$ of the $(0,\ell)$-graphs which are not generalized inflation graphs.
Case i) The generalized inflation graphs with maximum stretch index are those graphs $H$ obtained from graphs $G$ with maximum stretch index. Those graphs $G$ are the cycle graphs, and from Corollary~\ref{cor:cycle}, the stretch index for the graphs that belong to $A$ is upper bounded by $2\ell - 1$.
Case ii) The graphs that belong to $B$ are not inflation graphs. In this case, for any graph $H$ in $B$ we can consider its subjacent graph $G$.  Hence, we have that the stretch index of $H$ is equal to $2\sigma(G)+1$, for some subjacent graph $G$, by the same argument given in Lemma~\ref{lm:inflado}. 
Therefore, any subjacent graph $G$ isomorphic to a cycle graph correspond to a graph $H$ for which any edge of $H$ is between two vertices inside a clique of a $(0,\ell)$ partition, or it is between between two consecutive cliques associated to two consecutive vertices of the cycle graph $G$. 
The worst case occurs when the set of edges between two consecutive cliques form a matching, otherwise there would be a vertex $y_K$ in a clique $K$ adjacent to a vertex in consecutive cliques $K'$ and $K''$, hence it would be possible to create a spanning tree of $H$ only using $y_K$ on a $\sigma(H)$-path.
Hence, $H$ has $\sigma(H) = 2\sigma(G)+1 = 2\ell -1$ because $G$ is a cycle graph.
Suppose that a subjacent graph $G$ of $H$ is non isomorphic to a cycle graph, then there are edges between non consecutive vertices of a cycle, which correspond to a stretch index of $G$ less than $n(G)-1$, hence, less then the stretch index of a cycle graph, and by Lemma~\ref{lm:inflado}, $\sigma(H) < 2(n(G)-1) +1 = 2n(G)-1$.
\end{proof}

Note that as consequence of previous results, we have that if a graph is a $(0,\ell)$, then its stretch index is at most $2\ell -1$ and the equality holds if a graph is a generalized inflation graph of a cycle graph. 
One may ask if such a condition is also necessary, but the answer is no. 
An example of a $(0,\ell)$-graph that is not a generalized inflation graph of a cycle graph and has stretch index equal to $2\ell -1$ is a cycle power graph $C_6^2$, which can be constructed from a cycle $C_6$, adding edges from vertices at a distance equal to $2$ in $C_6$. 
Note that such a graph is a $(0,2)$-graph and $\sigma(C_6^2) = 3$, as proved by~\cite{TCS20} the stretch index of any cycle power graph.

As a direct consequence of the arguments given in Case ii) of Theorem~\ref{thm:LowerUpper}, we have Corollary~\ref{cor:charactUpper}.

\begin{corollary}\label{cor:charactUpper}
A $(0,\ell)$-graph $H$ has $\sigma(H) = 2\ell - 1$ if and only if the two following restrictions hold: i) the subjacent graph of $H$ is a cycle graph; and ii) there is not a vertex $w$ in $H$ belonging to a clique $K$ of a $(0,\ell)$ partition such that $w$ is adjacent to vertices belonging to two other cliques $K'$ and $K''$ of the $(0,\ell)$ partition of $H$.
\end{corollary}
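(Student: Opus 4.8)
\textbf{Proof proposal for Corollary~\ref{cor:charactUpper}.}

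The plan is to extract the statement directly from the analysis carried out in Case ii) of the proof of Theorem~\ref{thm:LowerUpper}, making the two implications explicit and verifying that no case slips through. First I would set up notation: let $H$ be a $(0,\ell)$-graph with $(0,\ell)$-partition into cliques $K_1,\dots,K_\ell$, and let $G$ be its subjacent graph, so $G$ has $\ell$ vertices $u_{K_1},\dots,u_{K_\ell}$ and an edge $u_{K_i}u_{K_j}$ whenever some edge of $H$ joins $K_i$ to $K_j$. The argument in Theorem~\ref{thm:LowerUpper} already shows that $\sigma(H) = 2\sigma(G)+1$ when $H$ is not itself a generalized inflation graph; for the generalized-inflation case Lemma~\ref{lm:inflado} gives the same identity. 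So in all cases $\sigma(H)=2\sigma(G)+1$, and since $G$ has $\ell$ vertices and is not a tree (else $H$ would be a tree, excluded), $\sigma(G)\le \ell-1$ with equality iff $G$ is the cycle $C_\ell$ — this is the standard fact that among $n$-vertex connected graphs the stretch index is maximized, and equals $n-1$, exactly on $C_n$. This immediately handles restriction i): $\sigma(H)=2\ell-1$ forces $\sigma(G)=\ell-1$, i.e.\ $G\simeq C_\ell$, and conversely $G\simeq C_\ell$ gives $\sigma(G)=\ell-1$.

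Next I would address why restriction ii) is needed even once $G\simeq C_\ell$. This is precisely the "worst case" discussion in Case ii): if some vertex $w\in K$ is adjacent to vertices in two distinct neighboring cliques $K'$ and $K''$ along the cycle, then in building a tree $\sigma(H)$-spanner we can route the path through the single vertex $w$ instead of spending two clique-internal edges at $K$, which strictly shortens the longest induced path forced by the cyclic structure and yields $\sigma(H)<2\ell-1$. Conversely, when no such $w$ exists, the edges between every pair of consecutive cliques form a matching (as argued there), so every walk around the cycle in any spanning tree must enter and leave each clique through two different vertices, forcing a clique-internal edge at each of the $\ell$ cliques; combined with the $\ell$ inter-clique edges this produces an induced cycle of length $2\ell$ in $H$, hence $\sigma(H)\ge 2\ell-1$, and with the upper bound $\sigma(H)=2\ell-1$. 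I would phrase this as: restriction ii) $\Leftrightarrow$ the inter-clique edges form matchings $\Leftrightarrow$ $H$ is a generalized inflation of $C_\ell$, at which point Corollary~\ref{cor:cycle} finishes the "if" direction cleanly.

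For the "only if" direction I would argue by contraposition on each restriction: if i) fails then $\sigma(G)\le\ell-2$ so $\sigma(H)\le 2\ell-3<2\ell-1$; if i) holds but ii) fails, the shortcut through $w$ described above gives $\sigma(H)<2\ell-1$. The only subtlety I anticipate is making the "shortcut" argument fully rigorous rather than heuristic: one must show that the existence of such a vertex $w$ genuinely lowers the \emph{minimum} over all spanning trees, i.e.\ exhibit a concrete spanning tree of $H$ of stretch $\le 2\ell-2$. I would do this explicitly — take the spanning tree that stars each clique at an appropriate high-degree vertex, but at clique $K$ star it at $w$ and route both cycle-connections through $w$ — and then check, edge by edge of $H$ (clique-internal edges, matching-or-not inter-clique edges), that the tree distance never exceeds $2\ell-2$. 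This bookkeeping, essentially a repeat of the distance analysis in Lemma~\ref{lm:inflado} with one clique contracted, is the main technical obstacle, though it is routine in spirit; everything else follows formally from Theorem~\ref{thm:LowerUpper}, Lemma~\ref{lm:inflado} and Corollary~\ref{cor:cycle}.
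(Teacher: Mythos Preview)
There is a genuine gap. Your chain of equivalences ``restriction ii) $\Leftrightarrow$ the inter-clique edges form matchings $\Leftrightarrow$ $H$ is a generalized inflation of $C_\ell$'' is false, and the paper itself flags this just before stating the corollary: $C_6^2$ is a $(0,2)$-graph with $\sigma=2\ell-1$ that is \emph{not} a generalized inflation of a cycle, yet it must satisfy i) and ii) for the corollary to hold. For $\ell\ge 3$ you can see the failure directly: take cliques $K_1=\{a_1,a_2,a_3\}$, $K_2=\{b_1,b_2,b_3\}$, $K_3=\{c_1,c_2,c_3\}$ with inter-clique edges $a_1b_1,\,a_1b_2,\,b_3c_1,\,c_2a_2$. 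Every vertex has external neighbours in at most one other clique, so condition ii) holds, but the edges between $K_1$ and $K_2$ are not a matching and $H$ is not a generalized inflation of $C_3$. Condition ii) only forbids a vertex from touching \emph{two different} neighbouring cliques; it says nothing about having several neighbours in the \emph{same} neighbouring clique.

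Because of this, your ``if'' direction collapses: you cannot reduce to Corollary~\ref{cor:cycle}. What the argument in Case~ii) of Theorem~\ref{thm:LowerUpper} is really using is that condition~ii) forces, in \emph{every} spanning tree of $H$, the unique tree path that winds once around the cyclic subjacent graph to enter and leave each clique $K$ through two \emph{distinct} vertices of $K$ (since no single vertex of $K$ can reach both neighbouring cliques), hence to spend at least one $K$-internal edge at each of the $\ell$ cliques in addition to the $\ell$ inter-clique steps; this yields the lower bound $\sigma(H)\ge 2\ell-1$ directly, without ever asserting that the inter-clique edges form a matching. Your ``only if'' direction is on the right track --- the shortcut through a vertex $w$ adjacent to both neighbouring cliques is exactly the mechanism that drops the bound --- but the explicit tree you promise must be produced for arbitrary inter-clique edge sets (not just matchings), and you should also be cautious about invoking the identity $\sigma(H)=2\sigma(G)+1$ as if it held for every $(0,\ell)$-graph (consider $K_6$ viewed as a $(0,3)$-graph).
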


\section{Conclusions}\label{sec:conc}

The $3$-admissibility has been studied for many years, and several classes are known to be $3$-admissible. Tracking these classes and structurally characterizing them sounds good when studying the tractability of the $3$-admissibility, which is the greatest problem we aim to solve. 
In this work we determine strategies for determining the stretch indexes' for split graphs and cographs, known to be $3$-admissible~(c.f.~\cite{brandstadt2007tree,TCS20}), and we deal with their superclasses: $(k,\ell)$-graphs and graphs with few $P_4$'s. 
As a byproduct, we prove that $t$-admissibility is \NP-complete for line graphs of subdivided graphs, corresponding to an advance regarding the edge $t$-admissible problem, proposed by~\cite{ctw20}.
Regarding $(0,\ell)$-graphs, we present lower and upper bounds for the stretch index, in such a way that we are able to determine graph classes that have stretch indexes reaching either the lower or the upper bound and characterize graphs that have stretch index equal to the proposed upper bound. 
For graphs with few $P_4$'s, our characterizations yield to simpler and faster algorithms in order to determine their stretch indexes. 
An interesting question is: Is it possible to develop faster algorithms for some other graph classes already classified according to their admissibility? 

\section*{Acknowledgments}
This study was financed by CAPES - Finance Code 001, and by FAPERJ.




\end{document}